\def\EE{\mathbb E}
\def\Tr{\mathrm{Tr}}
\def\SNR{\mathrm{SNR}}
\newtheorem{stat}{Statement}[section]
\newtheorem{thm}[stat]{Theorem}
\newtheorem{lem}[stat]{Lemma}
\newtheorem{prop}[stat]{Proposition}
\begin{document}

\title{On the Broadcast Capacity\\of Large Wireless Networks at Low SNR}
\author{Serj Haddad, Olivier L\'ev\^eque\\ Information Theory Laboratory -- School of Computer and Communication Sciences\\ EPFL, 1015 Lausanne, Switzerland
-- \{serj.haddad,olivier.leveque\}@epfl.ch \footnote{This paper was presented at ISIT 2015.}}
\maketitle
\thispagestyle{plain}
\pagestyle{plain}

\begin{abstract}
The present paper focuses on the problem of broadcasting information in the most efficient manner in a large two-dimensional {\it ad hoc} wireless network at low SNR and under line-of-sight propagation. A new communication scheme is proposed, where source nodes first broadcast their data to the entire network, despite the lack of sufficient available power. The signal's power is then reinforced via successive back-and-forth beamforming transmissions between different groups of nodes in the network, so that all nodes are able to decode the transmitted information at the end. This scheme is shown to achieve asymptotically the broadcast capacity of the network, which is expressed in terms of the largest singular value of the matrix of fading coefficients between the nodes in the network. A detailed mathematical analysis is then presented to evaluate the asymptotic behavior of this largest singular value.
\end{abstract}

\begin{IEEEkeywords}
wireless networks, broadcast capacity, low SNR communications, beamforming strategies, random matrices 
\end{IEEEkeywords}

\section{Introduction}
The literature on the study of scaling laws in large {\it ad hoc} wireless networks concentrates mainly on \textit{multiple-unicast} (one-to-one) transmissions (see e.g.~\cite{GuptaUnicast,AyferUnicast}). This does not degrade by any means the importance of investigating \textit{multicast} (one-to-many) transmissions for several reasons such as the need of many network protocols to broadcast control signals or to enhance cooperation among nodes belonging to the same cluster or cell. In the present paper, we are interested in studying how can source nodes broadcast their data to the whole network in the most efficient way. Previous works investigated the broadcast capacity of wireless networks under specific channel models and mainly at high SNR \cite{GastparBC,KeshavarzBC,TavliBC}. Of course, multiple strategies exist in this context, but from the scaling law point of view (that is, for large networks), the simplest communication strategy, where source nodes take turns broadcasting their messages to the  entire network, can be shown to be asymptotically optimal (up to logarithmic factors), when the power path loss is that of free space propagation. For a stronger power path loss, still at high SNR, simple multi-hopping strategies also allow to achieve an asymptotically optimal broadcast capacity, so there is not much to be discussed either in this case from the scaling law point of view.

In the present paper, we address the low SNR regime and consider the line-of-sight (LOS) propagation model described in Section \ref{sec:model} below. In this regime, the power available does not allow for a source node to successfully transmit a message to its nearest neighbour without waiting for some amount of time in order to spare power. In this case, 
contrary to the high SNR case, none of the two strategies described above (time-division or multi-hop broadcasting) is asymptotically optimal. This issue was first revealed in \cite{AllaHierBF} in the context of one-dimensional networks, under the LOS model. For such networks, the authors proposed a hierarchical beamforming scheme to broadcast data to the network, that was proven to achieve asymptotic optimal performance.

The generalization of this idea to two-dimensional networks is not immediate. Indeed, a particular feature of one-dimensional networks is that it is always possible for a group of nodes to beamform a given signal to all the other nodes in the network {\em simultaneously}. In two dimensions, a full beamforming gain is only achievable between groups of nodes that are sufficiently far apart from each other. This was already observed in \cite{AllaTelescopBF}, where a strategy was developed to enhance multiple-unicast communications in wireless networks under the LOS model. Taking inspiration from this paper, we propose below a new multi-stage beamforming scheme which is shown to achieve asymptotic optimal performance for broadcasting information in a two-dimensional wireless network.

An interesting aspect of our broadcast strategy is that it achieves the same performance as plain time-division, but with asymptotically much less power. In a large network, this could allow for example to send control signals or channel state information at low cost in the network, without hurting other transmissions. 

We give a detailed description of the scheme in Section \ref{sec:scheme}, as well as a proof of its optimality in Section \ref{sec:opt}. The proof of optimality is done in two steps. We first provide a general upper bound on the broadcast capacity of wireless networks (see Theorem \ref{thm:cap}), whose expression involves the matrix made of fading coefficients between the nodes in the network. We then proceed to characterize the broadcast capacity of two-dimensional wireless networks under the LOS model, by obtaining an asymptotic upper bound on the largest singular value of the above mentioned matrix. This result is of interest in its own right, as such matrices have not been previously studied in the mathematical literature. In particular, there is much less randomness in such a matrix than in classically studied random matrices. We propose here a recursive method to upper bound its largest singular value.

\section{Model} \label{sec:model}
There are $n$ nodes uniformly and independently distributed in a square of area $A=n$, so that the node density remains constant as $n$ increases.  Every node wants to broadcast a different message to the whole network, and all nodes want to communicate at a common {\em per user} data rate $r_n$ bits/s/Hz. We denote by $R_n = n \, r_n$ the resulting {\em aggregate} data rate and will often refer to it simply as ``broadcast rate'' in the sequel. The broadcast capacity of the network, denoted as $C_n$, is defined as the maximum achievable aggregate data rate $R_n$. We assume that communication takes place over a flat channel with bandwidth $W$ and that the signal $Y_j[m]$ received by the $j$-th node at time $m$ is given by
$$
Y_j[m] = \sum_{k \in {\mathcal T}} h_{jk} \, X_k[m] + Z_j[m],
$$
where $\mathcal T$ is the set of transmitting nodes, $X_k [m]$ is the signal sent at time $m$ by node $k$ and $Z_j [m]$ is additive white circularly symmetric Gaussian noise (AWGN) of power spectral density $N_0/2$ Watts/Hz. We also assume a common average power budget per node of $P$ Watts, which implies that the signal $X_k$ sent by node $k$ is subject to an average power constraint $\EE(|X_k|^2) \le P$. In line-of-sight environment, the complex baseband-equivalent channel gain $h_{jk}$ between transmit node $k$ and receive node $j$ is given by
\begin{equation} \label{eq:model}
h_{jk} = \sqrt{G} \; \dfrac{\exp(2 \pi i r_{jk} / \lambda)}{r_{jk}},
\end{equation}
where $G$ is Friis' constant, $\lambda$ is the carrier wavelength, and $r_{jk}$ is the distance between node $k$ and node $j$. Let us finally define
$$
\SNR_s=\frac{GP}{N_0 W},
$$
which is the SNR available for a communication between two nodes at distance $1$ in the network.

It should be noticed that the above line-of-sight model departs from the traditional assumption of i.i.d.~phase shifts in wireless networks. The latter assumption is usually justified by the fact that inter-node distances are in practice much larger than the carrier wavelength, implying that the numbers $2 \pi r_{jk} / \lambda$ can be roughly considered as i.i.d. This approximation was however shown in \cite{FMM09} to be inaccurate in the setting considered in the present paper. A second remark is that no multipath fading is consdered here, which would probably reduce in practice the efficiency of the strategy proposed in the following paragraph.

We focus in the following on the low SNR regime, by which we mean, as in \cite{AllaHierBF}, that $\SNR_s=n^{-\gamma}$ for some constant $\gamma>0$. This means that the power available at each node does not allow for a constant rate direct communcation with a neighbor. This could be the case e.g., in a sensor network with low battery nodes, or in a sparse network with long distances between neighboring nodes.

In order to simplify notation, we choose new measurement units such that $\lambda=1$ and $G/(N_0 W)=1$ in these units. This allows us to write in particular that $\SNR_s=P$.

\section{Back-and-Forth Beamforming Strategy} \label{sec:scheme}
First note that under the LOS model \eqref{eq:model} and the assumptions made in the previous section, the time division scheme described in the introduction achieves a broadcast (aggregate) rate $R_n$ of order $\min(P,1)$. Indeed, a rate of order $1$ is obviously achieved at high SNR\footnote{We coarsely approximate $\log P$  by $1$ here!}. At low SNR (i.e.~when $P \sim n^{-\gamma}$ for some $\gamma>0$), each node can spare power while the others are transmitting, so as to compensate for the path loss of order $1/n$ between the source node and other nodes located at distance at most $\sqrt{2n}$, leading to a broadcast rate of order $R_n \sim \log(1+ n P/n) \sim P$. As we will see, this broadcast rate is not optimal at low SNR. 

In the following, we propose a new broadcasting scheme that will prove to be order-optimal. In this new scheme, source nodes still take turns broadcasting their messages, but each transmission is followed by a series of network-wide back-and-forth transmissions that reinforce the strength of the signal, so that at the end, every node is able to decode the message sent from the source. The reason why back-and-forth transmissions are useful here is that in line-of-sight environment, nodes are able to (partly) align the transmitted signals so as to create a significant beamforming gain for each transmission (whereas this would not be the case in high scattering environment with i.i.d.~fading coefficients).\\

{\bf Scheme Description.} The scheme is split into two phases:\\

\textbf{Phase 1. Broadcast Transmission.} The source node broadcasts its message to the whole network. All the nodes receive a noisy version of the signal in this phase, which remains undecoded. This phase only requires one time slot.\\

\textbf{Phase 2. Back-and-Forth Beamforming with Time Division.} Let us first present here an idealized version of this second phase: upon receiving the signal from the broadcasting node, nodes start multiple back-and-forth beamforming transmissions between the two halves of the network, in order to enhance the strength of the signal. Although this simple scheme probably achieves the optimal performance claimed in Theorem \ref{thm:BF_TDMA} below, we lack the analytical tools to prove it. We therefore propose a time-division strategy, where clusters of size $M=\frac{n^{1/4}}{2c_1}\times \frac{n^{1/2}}{4}$ and separated by horizontal distance $d=\frac{n^{1/2}}{4}$ pair up for the back-and-forth transmissions, as illustrated on Fig.~\ref{fig:1}. During each transmission, there are $\Theta\left(n^{1/4-\epsilon}\right)$ cluster pairs operating in parallel (see below), so $\Theta(n^{1-\epsilon})$ nodes are communicating in total. The number of rounds needed to serve all nodes must therefore be $\Theta(n^{\epsilon})$.

\begin{figure}[t]
\centering
\includegraphics[scale=0.5]{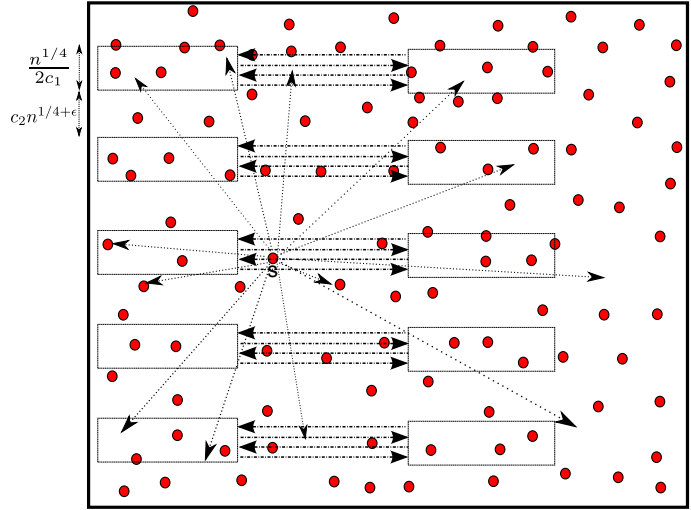}
\caption{$\sqrt{n}\times \sqrt{n}$ network divided into clusters of size $M=\frac{n^{1/4}}{2c_1}\times \frac{n^{1/2}}{4}$.
Two clusters of size $M$ placed on the same horizontal line and separated by distance $d=\frac{{n^{1/2}}}{4}$ pair up and start back-and-forth beamforming. The vertical separation between adjacent cluster pairs is $c_2n^{1/4+\epsilon}$.}
\label{fig:1}
\end{figure}  

After each transmission, the signal received by a node in a given cluster is the sum of the signals coming from the facing cluster, of those coming from other clusters, and of the noise. We assume a sufficiently large vertical distance $c_2 n ^{1/4+\epsilon}$ separating any two cluster pairs, as illustrated on Fig.~\ref{fig:1}. We show below that the broadcast rate between the operating clusters is $\Theta(n^{\frac{1}{2}}P)$. 
Since we only need $\Theta(n^{\epsilon})$ number of rounds to serve all clusters, phase 2 requires $\Theta(n^{-\frac{1}{2}+\epsilon}P^{-1})$ time slots. As such, back-and-forth beamforming achieves a broadcast rate of $\Theta(n^{\frac{1}{2}-\epsilon}P)$ bits per time slot.\\

In view of the described scheme, we are able to state the following result.

\begin{thm} \label{thm:BF_TDMA}
For any $\epsilon>0$ and $P=O(n^{-\frac{1}{2}})$, the following broadcast rate
\begin{align*}
R_n=\Omega\left( n^{\frac{1}{2}-\epsilon} P\right)
\end{align*}
is achievable with high probability\footnote{that is, with probability at least $1-O\left(\frac{1}{n^p}\right)$ as $n \to \infty$, where the exponent $p$ is as large as we want.} in the network. As a consequence, when $P = \Omega(n^{-\frac{1}{2}})$, a broadcast rate $R_n = \Omega(n^{-\epsilon})$ is achievable with high probability.
\end{thm}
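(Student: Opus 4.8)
The plan is to make rigorous the informal rate calculation sketched just above the theorem statement, by analyzing the two phases of the back-and-forth beamforming scheme and accounting for all sources of interference and noise. First I would fix the geometry: partition the $\sqrt{n}\times\sqrt{n}$ square into clusters of the stated size $M=\frac{n^{1/4}}{2c_1}\times\frac{n^{1/2}}{4}$, pair horizontally-aligned clusters at distance $d=\frac{n^{1/2}}{4}$, and group the pairs into $\Theta(n^\epsilon)$ rounds so that within a round the active pairs are vertically separated by $c_2 n^{1/4+\epsilon}$. A preliminary geometric lemma (a standard occupancy/Chernoff argument) should establish that with high probability every cluster contains $\Theta(M)=\Theta(n^{3/4})$ nodes, so that one may speak of per-cluster beamforming gains of the right order; this is where the ``with high probability'' in the statement comes from.

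Next I would analyze a single back-and-forth transmission between a facing pair of clusters. The key quantity is the effective channel gain obtained when one cluster beamforms (conjugate-matched to the forward channel) and the facing cluster receives and re-beamforms: iterating this back and forth, the signal power grows like the appropriate power of the largest singular value of the $M\times M$ submatrix of fading coefficients between the two clusters. Here I would invoke the LOS structure \eqref{eq:model}: because the two clusters are at horizontal distance $d\sim\sqrt n$ while having transverse extent $\sim n^{1/2}$ and the ``thin'' dimension is $\sim n^{1/4}$, the phases $2\pi r_{jk}/\lambda$ across the pair are such that the matrix is close to rank one with top singular value of order $\sqrt{M\cdot M}=M\sim n^{3/4}$ — this is exactly the regime in which a full beamforming gain is available (as in \cite{AllaTelescopBF}), and it must be checked that the cluster dimensions and the separation $d$ were chosen precisely to guarantee it. Combining the per-node received SNR $\sim M^2 P / (\text{noise} + \text{interference})$ with the $\Theta(M)=\Theta(n^{3/4})$ degrees of freedom... — more carefully, the net effect after the back-and-forth is that each of the $\Theta(n^{3/4})$ receiving nodes sees the message at SNR of order $n^{1/2}P$ times a constant, yielding a reliable rate of $\Theta(n^{1/2}P)$ per active pair, hence an aggregate $\Theta(n^{1/2-\epsilon}P)$ once we divide by the $\Theta(n^\epsilon)$ rounds and note Phase 1 costs only one slot.

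The main obstacle — and the reason the idealized scheme is replaced by the time-division version — is controlling the inter-cluster interference: signals beamformed by one active pair leak into all other active pairs of the same round. I would bound this by summing the LOS channel gains $1/r_{jk}^2$ over the $\Theta(n^{1/2})$ nodes of each interfering cluster and over the $\Theta(n^{1/4-\epsilon})$ interfering clusters, using that the $\ell$-th nearest interfering cluster pair is at vertical distance $\gtrsim \ell\, c_2 n^{1/4+\epsilon}$; the resulting interference power must be shown to be dominated by the desired signal power $\sim n^{1/2}P$ (equivalently, small compared to it, or at worst a constant factor), which is what pins down the exponent $\epsilon$ and the constant $c_2$. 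A secondary subtlety is that the beamforming weights are computed from received noisy signals, so one must verify that noise amplification through the iterated back-and-forth stays bounded — this is handled by keeping the number of back-and-forth rounds at a constant (or slowly growing) number, chosen large enough to reach the target SNR but small enough that accumulated noise remains negligible. Assembling these estimates gives $R_n=\Omega(n^{1/2-\epsilon}P)$ for $P=O(n^{-1/2})$; the final sentence of the theorem is then immediate, since for $P=\Omega(n^{-1/2})$ one simply applies the bound with $P$ replaced by $\Theta(n^{-1/2})$ (throttling the power budget), giving $R_n=\Omega(n^{-\epsilon})$.
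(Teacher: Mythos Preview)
Your overall architecture matches the paper's: occupancy lemma for cluster sizes, analysis of one back-and-forth stage, interference control across simultaneously active pairs, bounded noise amplification over $t$ iterations, and the power-throttling argument for the $P=\Omega(n^{-1/2})$ consequence. Two points, however, diverge from the paper in ways that matter.

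First, the beamforming mechanism. You propose conjugate matching and frame the per-stage gain as the top singular value of the $M\times M$ inter-cluster block. The paper does \emph{not} do this: it uses an explicit, CSI-light phase compensation $e^{-2\pi i x_k}$ depending only on each node's horizontal coordinate, and proves directly (Lemma~\ref{lem:Full_Beamforming}) that the resulting coherent sum has magnitude $\ge K_1 M/d$. Note also that the correct order of the gain is $M/d=\Theta(n^{1/4})$, not $M=\Theta(n^{3/4})$ as you wrote (the entries of the block are $\Theta(1/d)$, so the rank-one heuristic gives $\sqrt{M\cdot M}/d$); and each cluster contains $\Theta(n^{3/4})$ nodes, not $\Theta(n^{1/2})$. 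The explicit compensation is not merely a stylistic choice: it is what makes the interference analysis tractable, since the interfering weights are then simple deterministic functions of position.

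Second, and this is the real gap, your interference bound ``sum the gains $1/r_{jk}^2$'' is not enough. The interfering clusters are beamforming the \emph{same} message $X$, so their contributions at a foreign receiver add coherently in principle; a naive triangle inequality on the amplitudes gives $\sum_{\ell\ne i}\sum_k 1/r_{jk}\lesssim N_C M/d=\Theta(n^{1/2-\epsilon})$, which swamps the desired signal $\Theta(M/d)=\Theta(n^{1/4})$ by a factor $n^{1/4-\epsilon}$. Treating the phases as i.i.d.\ to get $\sum 1/r_{jk}^2$ is exactly the step that is \emph{not free} in the LOS model, where the phases are deterministic functions of geometry. The paper's Lemma~\ref{lem:Interference} is the missing ingredient: it bounds the coherent interference sum by $K_2\,(M/d)\,n^{-\epsilon}\log n$ via an integration-by-parts (stationary-phase) estimate on the mean together with Hoeffding concentration over the random node positions, and this is where the vertical spacing $c_2 n^{1/4+\epsilon}$ is actually used. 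Without an argument of this type, the interference control fails. Similarly, the paper tracks signal and noise explicitly through $t$ stages (choosing $A=\Theta((d/M)\,\mathrm{SNR}^{-1/(2t)})$ and $t$ large enough that $\mathrm{SNR}^{-1/t}=O(n^\epsilon)$), and checks that the accumulated noise power is $O(t)$ provided $\mathrm{SNR}=\Omega(N_C/M)$; your ``keep $t$ constant or slowly growing'' is the right idea but would need this quantitative check.
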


Before proceeding with the proof of the theorem, the following lemma provides an upper bound on the probability that the number of nodes inside each cluster deviates from its mean by a large factor. Its proof can be found in \cite{AllaThesis}, but is also provided in the Appendix for completeness.

\begin{lem} \label{lem:cluster_size}
Let us consider a cluster of area $M$ with $M=n^\beta$ for some $0 < \beta <1$. The number of nodes inside each cluster is then between $((1-\delta)M,\,(1+\delta)M)$ with probability larger than $1-\frac{n}{M}\exp(-\Delta(\delta)M)$ where $\Delta(\delta)$ is independent of $n$ and satisfies $\Delta(\delta)>0$ for $\delta>0$.
\end{lem}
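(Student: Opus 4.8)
The plan is to exploit the fact that, since the $n$ nodes are placed uniformly and independently in the square of area $n$, the number of nodes falling inside any fixed cluster of area $M$ is a Binomial random variable with mean $M$; the lemma then follows from a Chernoff bound on the two tails, combined with a union bound over all clusters.

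First I would fix one cluster $\mathcal C$ of area $M = n^\beta$. Because each node is placed uniformly in the square of area $n$, independently of the others, the event that a given node lies in $\mathcal C$ has probability $p = M/n$, and these events are independent across the $n$ nodes. Hence $N_{\mathcal C} := \#\{\text{nodes in }\mathcal C\}$ is $\mathrm{Binomial}(n, M/n)$ with $\EE(N_{\mathcal C}) = M$.

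Next I would apply the standard multiplicative Chernoff bound to both tails:
\begin{align*}
\Pr\big(N_{\mathcal C} \ge (1+\delta)M\big) &\le \exp\!\big(-M\,[(1+\delta)\ln(1+\delta)-\delta]\big),\\
\Pr\big(N_{\mathcal C} \le (1-\delta)M\big) &\le \exp\!\big(-M\,[(1-\delta)\ln(1-\delta)+\delta]\big).
\end{align*}
Both bracketed exponents are strictly positive for $\delta\in(0,1)$ (each behaves like $\delta^2/2$ as $\delta\to 0$) and do not depend on $n$. Writing $\Delta_0(\delta)$ for the minimum of the two, this gives $\Pr(|N_{\mathcal C}-M|>\delta M)\le 2\exp(-\Delta_0(\delta)M)$; since $M=n^\beta\to\infty$, for $n$ large enough the factor $2$ can be absorbed by setting, say, $\Delta(\delta)=\Delta_0(\delta)/2$, so that $\Pr(|N_{\mathcal C}-M|>\delta M)\le \exp(-\Delta(\delta)M)$ with $\Delta(\delta)>0$ independent of $n$.

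Finally, the square of area $n$ is partitioned into $n/M$ clusters of area $M$, so a union bound over all of them yields
$$\Pr\Big(\exists\,\text{cluster }\mathcal C:\ |N_{\mathcal C}-M|>\delta M\Big)\le \frac{n}{M}\,\exp(-\Delta(\delta)M),$$
which is precisely the complement of the claimed event. I do not expect a genuine obstacle here; the only points that need a little care are (i) justifying that the per-cluster count is exactly Binomial — i.e. that the clusters tile the square and the node placements are i.i.d.\ uniform — and (ii) the elementary bookkeeping that passes from the two one-sided Chernoff estimates to the stated two-sided bound with the exact prefactor $n/M$ (the number of clusters) and a clean, $n$-independent rate function $\Delta(\delta)$.
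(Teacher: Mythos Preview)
Your proposal is correct and essentially identical to the paper's proof: the paper also identifies the cluster count as a sum of $n$ i.i.d.\ Bernoulli$(M/n)$ variables, derives the same Chernoff exponent $\Delta_+(\delta)=(1+\delta)\log(1+\delta)-\delta$ via the moment-generating function, handles the lower tail symmetrically, and concludes with the union bound over the $n/M$ clusters. The only cosmetic difference is that the paper spells out the Chernoff derivation explicitly rather than citing it as a standard bound.
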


As shown in Fig. \ref{fig:1}, two clusters of size $M=\frac{n^{1/4}}{2c_1}\times \frac{n^{1/2}}{4}$ placed on the same horizontal line and separated by distance $d=\frac{{n^{1/2}}}{4}$ form a cluster pair. During the back-and-forth beamforming phase, there are many cluster pairs operating simultaneously. Given that the cluster width is $\frac{n^{1/4}}{2 c_1}$ and the vertical separation between adjacent cluster pairs is $c_2n^{1/4+\epsilon}$, there are
$$
N_C=\frac{n^{1/2}}{\frac{n^{1/4}}{2 c_1}+c_2n^{1/4+\epsilon}} = \Theta \left( n^{1/4-\epsilon} \right)
$$
cluster pairs operating at the same time. Let $\mathcal{R}_i$ and $\mathcal{T}_i$ denote the receiving and the transmitting clusters of the $i$-th cluster pair, respectively.

Two key ingredients for analyzing the multi-stage back-and-forth beamforming scheme are given in Lemma \ref{lem:Full_Beamforming} and Lemma \ref{lem:Interference}. The proofs are presented in the Appendix. 

\begin{lem} \label{lem:Full_Beamforming}
The maximum beamforming gain between the two clusters of the $i$-th cluster pair can be achieved by using a compensation of the phase shifts at the transmit side which is proportional to the horizontal positions of the nodes. More precisely, there exist a constant $c_1>0$ (remember that $c_1$ is inversely proportional to the width of cluster $i$) and a constant $K_1>0$ such that the magnitude of the received signal at node $j \in \mathcal{R}_i$ is lower bounded with high probability by
$$
\left|\sum_{k\in \mathcal{T}_i} \frac{\exp(2\pi i (r_{jk}-x_k))}{r_{jk}}\right| \ge K_1 \frac{M}{d},
$$
where $x_k$ denotes the horizontal position of node $k$.
\end{lem}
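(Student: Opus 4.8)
The plan is to show that, once node $k\in\mathcal T_i$ pre-rotates its signal by the $j$-independent amount $x_k$, all the terms of the sum point into one short arc of the unit circle, so that the sum is comparable to $\sum_{k\in\mathcal T_i}1/r_{jk}$; the latter is of order $M/d$ because $|\mathcal T_i|=\Theta(M)$ with high probability and $r_{jk}=\Theta(d)$ for every $j\in\mathcal R_i$, $k\in\mathcal T_i$.

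First I would choose coordinates so that the cluster pair sits in a horizontal strip whose vertical extent equals the cluster height $\frac{n^{1/4}}{2c_1}$, with $\mathcal T_i$ lying to the \emph{right} of $\mathcal R_i$ at horizontal gap $d=\frac{n^{1/2}}{4}$; the mirror configuration is handled identically after reflecting the horizontal axis (compensation $-x_k$). Then for $k\in\mathcal T_i$ and $j\in\mathcal R_i$ we have $a_{jk}:=x_k-x_j\in[d,3d]$ and $|y_j-y_k|\le\frac{n^{1/4}}{2c_1}$, and the elementary inequality $a\le\sqrt{a^2+b^2}\le a+\frac{b^2}{2a}$ (with $a=a_{jk}$, $b=y_j-y_k$) gives
$$
r_{jk}-x_k=-x_j+\theta_{jk},\qquad 0\le\theta_{jk}\le\frac{(y_j-y_k)^2}{2a_{jk}}\le\frac{(n^{1/4}/2c_1)^2}{2d}=\frac{1}{2c_1^2}.
$$
For $c_1$ large enough — already $c_1=2$ works — this forces $2\pi\theta_{jk}\in[0,\pi/4]$ for every $j,k$, hence $\cos(2\pi\theta_{jk})\ge1/\sqrt2$. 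I expect this Taylor estimate to be the one delicate point: it works only because the clusters are tall and thin, their vertical extent $\Theta(n^{1/4})$ being negligible against the horizontal separation $d=\Theta(n^{1/2})$, so that the residuals $\theta_{jk}$ lie in a short arc that, for fixed $j$, does not depend on $k$; enlarging the constant $c_1$ (i.e.\ shrinking the cluster height) is exactly what keeps that arc under control.

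The remainder is routine. Factoring out the $k$-independent phase $e^{-2\pi i x_j}$ and keeping real parts,
$$
\left|\sum_{k\in\mathcal T_i}\frac{\exp(2\pi i(r_{jk}-x_k))}{r_{jk}}\right|=\left|\sum_{k\in\mathcal T_i}\frac{e^{2\pi i\theta_{jk}}}{r_{jk}}\right|\ge\sum_{k\in\mathcal T_i}\frac{\cos(2\pi\theta_{jk})}{r_{jk}}\ge\frac1{\sqrt2}\sum_{k\in\mathcal T_i}\frac1{r_{jk}}.
$$
I then bound $r_{jk}\le\sqrt{(3d)^2+(n^{1/4}/2c_1)^2}<4d$ (recall $4d=\sqrt n$), so $1/r_{jk}>1/(4d)$, and invoke Lemma~\ref{lem:cluster_size} to get $|\mathcal T_i|\ge(1-\delta)M$ with probability at least $1-\frac{n}{M}\exp(-\Delta(\delta)M)$; since $M=\Theta(n^{3/4})$, a union bound over the $\Theta(n^{1/4})$ clusters keeps the failure probability $O(n^{-p})$ for every $p$, which is the meaning of ``with high probability'' here, and the estimate then holds for every $j\in\mathcal R_i$ and every $i$ on this event. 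Combining the two displays gives $\bigl|\sum_{k\in\mathcal T_i}\exp(2\pi i(r_{jk}-x_k))/r_{jk}\bigr|\ge\frac{1-\delta}{4\sqrt2}\,\frac{M}{d}$, i.e.\ the claim with $K_1=\frac{1-\delta}{4\sqrt2}$. The bound is of the right order, since the triangle inequality caps any phase choice at $\sum_{k\in\mathcal T_i}1/r_{jk}\le(1+\delta)M/d$.
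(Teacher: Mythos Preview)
Your proof is correct and follows essentially the same approach as the paper: bound the residual phase $r_{jk}-x_k-(\text{$j$-dependent constant})$ by $\frac{1}{2c_1^2}$ via the estimate $\sqrt{a^2+b^2}\le a+\frac{b^2}{2a}$, factor out the $k$-independent phase, take real parts, and use that $\cos(\pi/c_1^2)>0$ for $c_1$ large. Your version is in fact slightly more complete than the paper's, since you make explicit the invocation of Lemma~\ref{lem:cluster_size} for the ``with high probability'' clause and the bound $r_{jk}\le 4d$, both of which the paper leaves implicit in the final line $\ge K_1\,M/d$.
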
 

\begin{lem} \label{lem:Interference}
For every constant $K_2>0$, there exists a sufficiently large separating constant $c_2>0$ such that the magnitude of interfering signals from the simultaneously operating cluster pairs at node $j\in\mathcal{R}_i$ is upper bounded with high probability by
$$
\left|\sum_{\substack{l=1\\ l\neq i}}^{N_C}\sum_{k\in \mathcal{T}_l} \frac{\exp(2\pi i (r_{jk}-x_k))}{r_{jk}}\right| \le K_2\,\frac{M}{d \, n^{\epsilon}}\,\log n.
$$
\end{lem}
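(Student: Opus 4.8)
The plan is to fix a receiving node $j\in\mathcal{R}_i$, split the interfering contribution according to the source cluster as $I_j:=\sum_{l\neq i}S_l$ with $S_l:=\sum_{k\in\mathcal{T}_l}\frac{\exp(2\pi i(r_{jk}-x_k))}{r_{jk}}$, and bound $|I_j|\le\sum_{l\neq i}|S_l|$ by the triangle inequality. The governing geometric fact is that cluster $\mathcal{T}_l$ beamforms along a \emph{horizontal} axis towards $\mathcal{R}_l$, whereas $j$ lies at vertical distance of order at least $|l-i|\,c_2 n^{1/4+\epsilon}$ from that axis, adjacent cluster pairs being separated vertically by $c_2 n^{1/4+\epsilon}$. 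Consequently, the phase $\psi_k:=r_{jk}-x_k$, viewed as a function of the position of $k$ inside $\mathcal{T}_l$, has a gradient that stays bounded away from $0$ over the whole cluster, in \emph{both} coordinates, since $\partial_{x_k}\psi_k$ is of order $(|l-i|\,c_2 n^{1/4+\epsilon}/r_{jk})^2$ and $\partial_{y_k}\psi_k=(y_k-y_j)/r_{jk}$ is of order $|l-i|\,c_2 n^{1/4+\epsilon}/r_{jk}$; over the cluster's width $\Theta(d)$ and height $\Theta(n^{1/4})$ these translate into phase spans that are $\gg 1$. Hence there is no coherent (beamforming) alignment and $|S_l|$ should be of the much smaller, ``incoherent'' order $\sqrt{M}/r_{jk}\simeq\sqrt{M}/d$ rather than $M/d$.

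To make this rigorous I would first condition on the high-probability event of Lemma~\ref{lem:cluster_size} that every cluster holds $\Theta(M)$ nodes, and then treat the positions inside $\mathcal{T}_l$ as i.i.d.\ uniform. Write $S_l=\EE[S_l]+(S_l-\EE[S_l])$. For the mean, $\EE[S_l]=\Theta(M)\cdot\EE\!\left[\frac{e^{2\pi i\psi_k}}{r_{jk}}\right]$, and this expectation is an oscillatory integral over the cluster region that is small by non-stationary phase: integrating by parts once in each of the two variables and using the lower bounds above on $|\partial_{x_k}\psi_k|$ and $|\partial_{y_k}\psi_k|$, each step gains a factor equal to a constant over the corresponding phase span, so that $\big|\EE[S_l]\big|\lesssim\frac{M}{d}\cdot\frac{1}{|l-i|^{3}c_2^{3}n^{3\epsilon}}$ uniformly in $j$, whence $\sum_{l\neq i}\big|\EE[S_l]\big|=O\!\big(\frac{M}{d\,c_2^{3}n^{3\epsilon}}\big)$. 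For the fluctuation, $S_l-\EE[S_l]$ is a sum of $\Theta(M)$ independent, complex, centred terms, each of modulus at most $1/d$ since $r_{jk}\ge d$ for every $k\in\mathcal{T}_l$; a Hoeffding/Azuma-type inequality applied to its real and imaginary parts then gives $|S_l-\EE[S_l]|=O\!\big(\frac{\sqrt{M}}{d}\sqrt{\log n}\big)$ outside an event of probability $n^{-p}$, and a union bound over the $n$ receivers and the $\Theta(n^\epsilon)$ rounds keeps this valid simultaneously. Adding the two contributions and summing over the $N_C-1=\Theta(n^{1/4-\epsilon})$ values of $l$ gives $|I_j|=O\!\big(N_C\,\frac{\sqrt{M}}{d}\sqrt{\log n}+\frac{M}{d\,c_2^{3}n^{3\epsilon}}\big)$, which, since $M=\Theta(n^{3/4})$ and $d=\Theta(n^{1/2})$, is at most $K_2\,\frac{M}{d\,n^{\epsilon}}\log n$ once $n$ is large and $c_2$ is chosen large enough to absorb the implied constants (which depend on $c_1$ but not on $n$) --- exactly the freedom granted in the statement.

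I expect the non-stationary-phase estimate to be the main obstacle: the bound must hold \emph{uniformly} over all admissible placements of $(\mathcal{T}_l,\mathcal{R}_i)$, where both the phase $\psi_k$ and the amplitude $1/r_{jk}$ vary over the (wide, and possibly obliquely located) cluster, and the constants must be explicit enough for the summation over $l$ to close; one has to check in particular that $\nabla\psi_k$ has no zeros inside any interfering cluster and to control the integration-by-parts remainders in both variables. A secondary, more routine difficulty is the probabilistic one --- Chebyshev's inequality alone is too weak to survive a union bound over all $n$ receivers, so one must invoke a genuine concentration (or high-moment) estimate for the complex sums $S_l$ --- together with the bookkeeping needed to combine it with the cluster-occupancy event of Lemma~\ref{lem:cluster_size}.
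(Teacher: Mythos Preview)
Your proposal follows essentially the same route as the paper: split $I_j$ by cluster via the triangle inequality, decompose each cluster sum $S_l$ into its mean plus fluctuation, bound the mean by a non-stationary-phase (integration-by-parts) argument exploiting the vertical offset $|l-i|\,c_2 n^{1/4+\epsilon}$, and control the fluctuation by Hoeffding's inequality using $|f_{jk}|\le 1/d$.

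The one noteworthy difference is the treatment of the mean. You propose to integrate by parts in \emph{both} $x_k$ and $y_k$, obtaining a decay of order $\frac{M}{d}\cdot\frac{1}{|l-i|^3 c_2^3 n^{3\epsilon}}$ and hence a summable series in $l$. The paper integrates by parts only in the vertical variable $y_k$ (where $|\partial_{y_k} r_{jk}|\ge l c_2 n^{-1/4+\epsilon}$), which already yields $M|\EE[X_1^{(l)}]|\lesssim \frac{c_1}{c_2}\cdot\frac{M}{l\,d\,n^{\epsilon}}$; summing over $l$ produces the harmonic factor $\log N_C\lesssim\log n$ that appears in the statement, and the constant $c_1/c_2$ is precisely what is tuned (by choosing $c_2$ large) to meet the prescribed $K_2$. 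Your two-variable estimate is stronger and makes the role of $c_2$ almost vacuous for large $n$, but it also requires controlling the full 2D oscillatory integral uniformly over all geometries, which is more work than needed; the paper's single integration in $y_k$ is the minimal argument that closes.

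Both approaches arrive at the same fluctuation bound (yours with $\sqrt{\log n}$, the paper's with $\sqrt{n^\epsilon}$ for the concentration probability), and both verify that the fluctuation sum $N_C\cdot\frac{\sqrt{M}}{d}\cdot(\text{polylog})=\Theta(n^{1/8-\epsilon}\cdot\text{polylog})$ is negligible next to the target $\frac{M}{d n^\epsilon}\log n=\Theta(n^{1/4-\epsilon}\log n)$.
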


\begin{proof}[Proof of Theorem \ref{thm:BF_TDMA}]
The first phase of the scheme results in noisy observations of the message $X$ at all nodes, which are given by
\begin{align*}
Y_k^{(0)}=\sqrt{\SNR_k}\,X+Z_k^{(0)},
\end{align*}
where $\EE(|X|^2)=\EE(|Z_k^{(0)}|^2)=1$ and $\SNR_k$ is the signal-to-noise ratio of the signal $Y_k^{(0)}$ received at the $k$-th node. In what follows, we drop the index $k$ from $\SNR_k$ and only write $\SNR=\min_k\{\SNR_k\}$. Note that it does not make a difference at which side of the cluster pairs the back-and-forth beamforming starts or ends. Hence, assume the left-hand side clusters ignite the scheme by amplifying and forwarding the noisy observations of $X$ to the right-hand side clusters. The signal received at node $j\in\mathcal{R}_i$ is given by
\begin{equation}\label{Yj:1}
Y_j^{(1)} = \sum_{l=1}^{N_C}\sum_{k\in\mathcal{T}_l} \frac{\exp(2\pi i (r_{jk}-x_k))}{r_{jk}} A Y_k^{(0)} + Z_j^{(1)}
\end{equation}
where $A$ is the amplification factor (to be calculated later) and $Z_j^{(1)}$ is additive white Gaussian noise of variance $\Theta(1)$. We start by applying Lemma \ref{lem:Full_Beamforming} and Lemma \ref{lem:Interference} to lower bound
\begin{align*}
\left|\sum_{l=1}^{N_C}\sum_{k\in\mathcal{T}_l} \frac{\exp(2\pi i (r_{jk}-x_k))}{r_{jk}}\right|&\geq \left|\sum_{k\in\mathcal{T}_i} \frac{\exp(2\pi i (r_{jk}-x_k))}{r_{jk}}\right| - \left|\sum_{\substack{l=1\\ l\ne i}}^{N_C}\sum_{k\in\mathcal{T}_l} \frac{\exp(2\pi i (r_{jk}-x_k))}{r_{jk}}\right| \\
&\ge \left(K_1-K_2\frac{\log n}{n^{\epsilon}}\right) \frac{M}{d}=\Theta\left(\frac{M}{d}\right).
\end{align*}
For the sake of clarity, we can therefore approximate\footnote{We make this approximation to lighten the notation and make the exposition clear, but needless to say, the whole analysis goes through without the approximation; it just becomes barely readable.} the expression in \eqref{Yj:1} as follows
\begin{align*}
Y_j^{(1)} & = \sum_{l=1}^{N_C}\sum_{k\in\mathcal{T}_l} \frac{\exp(2\pi i (r_{jk}-x_k))}{r_{jk}} A \sqrt{\SNR_k}\,X + \sum_{l=1}^{N_C}\sum_{k\in\mathcal{T}_l} \frac{\exp(2\pi i (r_{jk}-x_k))}{r_{jk}} A Z_k^{(0)} + Z_j^{(1)}\\
& \simeq\frac{A M}{d} \sqrt{\SNR}\,X + \frac{A\sqrt{N_C M}}{d} Z^{(0)} + Z_j^{(1)} =\frac{A M}{d} \sqrt{\SNR}\,X + \frac{A M}{d}\sqrt{\frac{N_C}{M}} Z^{(0)} + Z_j^{(1)},
\end{align*}
where
$$
Z^{(0)}=\frac{d}{\sqrt{N_C  M}} \sum_{l=1}^{N_C} \sum_{k\in\mathcal{T}_l} \frac{\exp(2\pi i (r_{jk}-x_k))}{r_{jk}} Z_k^{(0)}.
$$
Note that $\EE(|Z^{(0)}|^2)=\Theta(1)$. Repeating the same process $t$ times in a back-and-forth manner results in a final signal at node $j\in\mathcal{R}_i$ in the left or the right cluster (depending on whether $t$ is odd or even) that is given by 
\begin{align*}
Y_j^{(k)} & = \left(\frac{A M}{d}\right)^t \sqrt{\SNR}\,X + \left(\frac{A M}{d}\right)^{t}\sqrt{\frac{N_C}{M}} \, Z^{(0)}\\
&  + \ldots + \left(\frac{A M}{d}\right)^{t-s}\sqrt{\frac{N_C}{M}} \, Z^{(s)}+\ldots+Z_j^{(t)},
\end{align*}
where
$$
Z^{(s)}= \frac{d}{\sqrt{N_C M}}\sum_{b=1}^{N_C}\sum_{k\in\mathcal{T}_b} \frac{\exp(2\pi i (r_{jk}-x_k))}{r_{jk}} Z_k^{(s)}.
$$
Note again that $\EE(|Z^{(s)}|^2)=\Theta(1)$, and $Z_j^{(t)}$ is additive white Gaussian noise of variance $\Theta(1)$. Finally, note that Lemma \ref{lem:Interference} ensures an upper bound on the beamforming gain of the noise signals, i.e.,  
$$
\left|\sum_{l=1}^{N_C}\sum_{k\in\mathcal{T}_l} \frac{\exp(2\pi i (r_{jk}-x_k))}{r_{jk}}\right|
\le \left|\sum_{k\in\mathcal{T}_i} \frac{\exp(2\pi i (r_{jk}-x_k))}{r_{jk}}\right| + \left|\sum_{\substack{l=1\\ l\ne i}}^{N_C}\sum_{k\in\mathcal{T}_l} \frac{\exp(2\pi i (r_{jk}-x_k))}{r_{jk}}\right| \le \left(1+K_2\frac{\log n}{n^{\epsilon}}\right)\frac{M}{d}.
$$
(notice indeed that the first term in the middle expression is trivially upper bounded by $M/d$, as it contains $M$ terms, all less than $1/d$). Now, we want the power of the signal to be of order 1, that is:
\begin{align}\label{eq:signal_power}
& \EE\left(\left(\left(\frac{A M}{d}\right)^t \sqrt{\SNR}\,X\right)^2\right)
= \left(\frac{A M}{d}\right)^{2t} \SNR = \Theta(1)\\\nonumber
& \Rightarrow A = \Theta\left(\frac{d}{M}\,\SNR^{-\frac{1}{2t}}\right).
\end{align}
Since at each round of TDMA cycle there are $\Theta(N_C M)$ nodes transmitting, then every node will be active $\Theta\left(\frac{N_C M}{n}\right)$ fraction of the time. As such, the amplification factor is given by
$$
A=\Theta\left(\sqrt{\frac{n}{N_C M}\tau P}\right),
$$
where $\tau$ is the number of time slots between two consecutive transmissions, i.e. every $\tau$ time slots we have one transmission. Therefore, we have
\begin{align*}
&A = \Theta\left(\frac{d}{M}\,\SNR^{-\frac{1}{2t}}\right)=\Theta\left(\sqrt{\frac{n}{N_C M}\tau P}\right)\\
& \Rightarrow \tau = \Theta\left(\frac{N_C \, d^2}{n \, M \, P}\,\SNR^{-\frac{1}{t}}\right).
\end{align*}
We can pick the number of back-and-forth transmissions $t$ sufficiently large to ensure that $\SNR^{-\frac{1}{t}}=O(n^{\epsilon})$, which results in $$\tau=O\left( \frac{1}{n^{1/2}P}\right).$$  
Moreover, the noise power is given by
\begin{align*}
\sum_{s=0}^{t-1} \EE\left(\left(\left(\frac{A M}{d}\right)^{t-s} \sqrt{\frac{N_C}{M}}Z^{(s)}\right)^2\right) + \EE\left(\left(Z_j^{(t)}\right)^2\right) &\le t \, \EE\left(\left(\left(\frac{A M}{d}\right)^t \sqrt{\frac{N_C}{M}}Z^{(0)}\right)^2\right) + 1\\
&\le t \, \left(\frac{A M}{d}\right)^{2t}{\frac{N_C}{M}}+1\\
&\overset{(a)}{\le}t+1=\Theta(1),
\end{align*}
where $(a)$ is true if and only if $\SNR=\Omega(N_C/M)=\Omega(n^{-1/2-\epsilon})$ (check eq. \eqref{eq:signal_power}), which is true: Distance separating any two nodes in the network is as most $\sqrt{2n}$, which implies that the $\SNR$ of the received signal at all the nodes in the network is $\Omega(n^{-1/2})$.

Given that the required $\tau=O\left(\frac{1}{n^{1/2}P}\right)$, we can see that for $P=O(n^{-1/2})$ the broadcast rate between simultaneously operating clusters is $\Omega(n^{1/2}P)$. Finally, applying TDMA of $\frac{n}{N_C M}=\Theta(n^{\epsilon})$ steps ensures that $X$ is successfully decoded at all nodes and the broadcast rate $R_n=\Omega\left(n^{1/2-\epsilon}P\right)$.

As a last remark, let us mention that the consequence stated in the theorem for the regime where more power is available at the transmitters is an obvious one: by simply reducing the amount of power used at each node to exactly $n^{-1/2} \le P$, one achieves the following broadcast rate, using the first part of the theorem:
$$
R_n = \Omega  \left( n^{\frac{1}{2}-\epsilon} \,  n^{-\frac{1}{2}} \right) = \Omega \left( n^{-\epsilon} \right).
$$
This completes the proof of the theorem. 
\end{proof}

\section{Optimality of the Scheme} \label{sec:opt}
In this section, we first establish a general upper bound on the broadcast capacity of wireless networks at low SNR, which applies to a general fading matrix $H$
(with proper measurement units such that again, $\SNR_s=P$ in these units).

\begin{thm} \label{thm:cap}
Let us consider a network of $n$ nodes and let $H$ be the $n \times n$ matrix with $h_{jj}=0$ on the diagonal and $h_{jk}=$ the fading coefficient between node $j$ and node $k$ in the network. The broadcast capacity of such a network with $n$ nodes is then upper bounded by
$$
C_n \le P \, \Vert H \Vert^2
$$
where $P$ is the power available per node and $\Vert H \Vert$ is the spectral norm (i.e.~the largest singular value) of $H$.
\end{thm}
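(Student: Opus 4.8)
\emph{Plan.} This is an information-theoretic converse; I would prove it by applying Fano's inequality \emph{at every one of the $n$ nodes} and summing the resulting inequalities. This summation is the crucial move: it produces a factor $n$ on the left-hand side which eventually cancels against the total transmit-power budget $nP$ on the right, and in between the sum over destinations turns a sum of diagonal entries into a trace, which is then controlled by $\Vert H\Vert^2$. A single cut is too weak here: isolating one destination gives only a MISO bound $R_n\lesssim P\,\Vert\mathrm{row}_j H\Vert_1^2$, and summing single-source SIMO bounds gives only $R_n\lesssim P\,\Vert H\Vert_F^2$; for the LOS matrix both exceed $P\,\Vert H\Vert^2$ by a polynomial factor (indeed $\Vert H\Vert_F^2$ is of order $n\log n$ whereas $\Vert H\Vert^2$ will turn out to be of order $\sqrt n$).

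Concretely I would proceed in three steps, in the normalized units of the model ($Y_j[t]=\sum_k h_{jk}X_k[t]+Z_j[t]$, $Z_j[t]\sim\mathcal{CN}(0,1)$, $h_{jj}=0$, and block-averaged power $\frac1m\sum_{t=1}^m\EE(|X_k[t]|^2)\le P$). \textbf{(i) Fano at each node.} Node $j$ must recover $(W_k)_{k\neq j}$ from $Y_j^m$; since the messages are independent and uniform, $I(W_{[n]};Y_j^m)\ge I(W_{-j};Y_j^m)\ge (n-1)\,m\,r_n-m\,\delta_m$ with $\delta_m\to0$. Summing, $\sum_{j=1}^n I(W_{[n]};Y_j^m)\ge (n-1)\,m\,R_n-n\,m\,\delta_m$. \textbf{(ii) Single-letterization.} For fixed $j$, write $I(W_{[n]};Y_j^m)=\sum_{t=1}^m I(W_{[n]};Y_j[t]\mid Y_j^{t-1})$; since $Z_j[t]$ is independent of $(Y_j^{t-1},W_{[n]})$ and of the signal part $(HX[t])_j$ (even though $X[t]$ comes from an arbitrary cooperation protocol on past observations), one gets $h(Y_j[t]\mid Y_j^{t-1},W_{[n]})\ge h(Z_j[t])$, hence by the Gaussian entropy bound $I(W_{[n]};Y_j^m)\le \sum_{t=1}^m\log\!\big(1+(H Q_t H^*)_{jj}\big)$, where $Q_t:=\EE(X[t]X[t]^*)$. \textbf{(iii) Collapse the sum.} Using $\log(1+x)\le x$ and then summing over $j$,
\[
\sum_{j=1}^n I(W_{[n]};Y_j^m)\ \le\ \sum_{t=1}^m\Tr(H Q_t H^*)=\sum_{t=1}^m\Tr(Q_t H^*H)\ \le\ \Vert H\Vert^2\sum_{t=1}^m\Tr(Q_t)\ \le\ n\,m\,P\,\Vert H\Vert^2,
\]
where the middle step is $\Tr(QA)\le\lambda_{\max}(A)\,\Tr(Q)$ for positive semidefinite $Q,A$, and the last uses $\sum_t\Tr(Q_t)=\sum_k\sum_t\EE(|X_k[t]|^2)\le nmP$. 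Combining (i) and (iii) and letting $m\to\infty$ gives $(n-1)\,C_n\le n\,P\,\Vert H\Vert^2$, i.e.\ $C_n\le P\,\Vert H\Vert^2$ up to the harmless factor $\tfrac{n}{n-1}\to1$.

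The only delicate point I anticipate is step (ii): because the transmit signals depend on past received signals (the network allows arbitrary relaying), one cannot invoke a memoryless-channel identity and must argue the per-symbol entropy bound by hand — this is the usual care required in network converses, and it goes through cleanly only because each node's own noise $Z_j[t]$ is statistically detached from everything it is conditioned on. Everything else is routine: $\log(1+x)\le x$ (so the bound is actually valid at every SNR, but tight only at low SNR, which is why it is stated there), the elementary inequality $\Tr(QA)\le\lambda_{\max}(A)\Tr(Q)$, and the power bookkeeping. What is genuinely hard — and is left to the random-matrix analysis later in the paper — is not this inequality but showing it is of the right order, i.e.\ that $\Vert H\Vert=\Theta(n^{1/4})$ for the LOS matrix, so that $P\,\Vert H\Vert^2$ matches the achievable $\Omega(n^{1/2-\epsilon}P)$.
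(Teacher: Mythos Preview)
Your proof is correct and follows essentially the same route as the paper: both average the per-destination mutual-information bound over all $n$ nodes (the paper via the cut-set bound followed by $\min\to$ average, you via summing Fano at each node), linearize with $\log(1+x)\le x$, and finish with $\Tr(Q\,H^{\dagger}H)\le\Vert H\Vert^2\,\Tr(Q)\le nP\,\Vert H\Vert^2$. Your version is marginally more direct---it skips the paper's intermediate $\log\det$-concavity step (unnecessary once $\log(1+x)\le x$ is applied termwise) and handles the block length and relaying explicitly instead of quoting the cut-set theorem---and it makes visible the same harmless $\tfrac{n}{n-1}$ factor that the paper silently absorbs.
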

 
\begin{proof}
Using the classical cut-set bound \cite[Theorem 15.10.1]{CT06}, the following upper bound on the broadcast capacity $C_n$ is obtained:
\begin{align*}
C_n \le \max_{\stackrel{p_X :}{\EE(|X_k|^2) \le P, \; \forall 1 \le k \le n}} \min_{1 \le j \le n} I(X_{\{1,\ldots,n\}\backslash\{j\}} ; Y_j|X_j).
\end{align*}
Moreover, we have
\begin{align*}
I(X_{\{1,\ldots,n\}\backslash\{j\}},X_j ; Y_j)&= I(X_{\{1,\ldots,n\}\backslash\{j\}} ; Y_j)+ I(X_j ; Y_j|X_{\{1,\ldots,n\}\backslash\{j\}})\\
&\overset{(a)}{=} I(X_{\{1,\ldots,n\}\backslash\{j\}} ; Y_j)\\
&=I(X_j ; Y_j)+I(X_{\{1,\ldots,n\}\backslash\{j\}} ; Y_j|X_j)\\
&\overset{(b)}{\geq} I(X_{\{1,\ldots,n\}\backslash\{j\}} ; Y_j|X_j),
\end{align*}
where $(a)$ follows from the fact that $X_j-X_{\{1,\ldots,n\}\backslash\{j\}}-Y_j$ forms a Markov chain, which means that $I(X_j ; Y_j|X_{\{1,\ldots,n\}\backslash\{j\}})=0$, and $(b)$ follows from the fact that $I(X_j ; Y_j)\geq 0$. Therefore, we get
\begin{align*}
C_n &\le \max_{\stackrel{p_X :}{\EE(|X_k|^2) \le P, \; \forall 1 \le k \le n}} \min_{1 \le j \le n} I(X_{\{1,\ldots,n\}\backslash\{j\}} ; Y_j|X_j)\\
& \le \max_{\stackrel{p_X :}{\EE(|X_k|^2) \le P, \; \forall 1 \le k \le n}} \min_{1 \le j \le n} I(X_{\{1,\ldots,n\}\backslash\{j\}} ; Y_j)\\
& \le \max_{\stackrel{Q_X \ge 0}{(Q_X)_{kk} \le P, \; \forall 1 \le k \le n}} \min_{1 \le j \le n} \log (1 + h_j Q_X h_j^{\dagger})
\end{align*}
where $h_j=(h_{j1}, \ldots, h_{j,j-1}, 0, h_{j,j+1}, \ldots, h_{jn})$, as the joint distribution $p_X$ maximizing the above expression is clearly Gaussian. Using then the fact that the minimum of a set of numbers is less than its average, the above expression can be further bounded by 
\begin{align*}
C_n & \le \max_{\stackrel{Q_X \ge 0}{(Q_X)_{kk} \le P, \; \forall 1 \le k \le n}} \frac{1}{n} \sum_{j=1}^n \log (1+h_j Q_X h_j^{\dagger})\\
& = \max_{\stackrel{Q_X \ge 0}{(Q_X)_{kk} \le P, \; \forall 1 \le k \le n}} \frac{1}{n} \sum_{j=1}^n \log \det (I_n + h_j^{\dagger} h_j Q_X)\\
& \le \max_{\stackrel{Q_X \ge 0}{(Q_X)_{kk} \le P, \; \forall 1 \le k \le n}} \log \det \left( I_n + \frac{1}{n} \sum_{j=1}^n h_j^{\dagger} h_j Q_X \right)
\end{align*}
using successively the property that $\log \det(I+AB) = \log \det(I+BA)$ and the fact that $\log \det( \cdot)$ is concave. Observing now that the $n \times n$ matrix $H$ whose entries are given by $h_{jk}= (h_j)_k$ is the one in the theorem statement and that $\sum_{j=1}^n h_j^{\dagger} h_j = H^{\dagger}H$, we can rewrite, using again $\log \det(I+AB) = \log \det(I+BA)$: 
\begin{align*}
C_n &\le \max_{\stackrel{Q_X \ge 0}{(Q_X)_{kk} \le P, \; \forall 1 \le k \le n}} \log \det \left( I_n + \frac{1}{n} H Q_X H^{\dagger}\right)\\
&\le \max_{\stackrel{Q_X \ge 0}{(Q_X)_{kk} \le P, \; \forall 1 \le k \le n}} \frac{1}{n} \, \Tr(H Q_X H^{\dagger}) \\
&\le \max_{\stackrel{Q_X \ge 0}{(Q_X)_{kk} \le P, \; \forall 1 \le k \le n}} \frac{1}{n} \, \Tr(Q_X) \, \Vert H \Vert^2  = P \, \Vert H \Vert^2
\end{align*}
where the last inequality follows from the fact that $\Tr(BAB^{\dagger})\le \Vert B \Vert^2 \Tr(A)$, for any matrix $B$ and $A\geq 0$. This completes the proof.
\end{proof}

We now aim to specialize Theorem \ref{thm:cap} to line-of-sight fading, where the matrix $H$ is given by
\begin{equation} \label{eq:fading_matrix}
h_{jk} = \begin{cases} 0 & \text{if }j=k \\ \dfrac{\exp(2 \pi i r_{jk})}{r_{jk}} & \text{if } j \ne k \end{cases}
\end{equation}
The rest of the section is devoted to proving the proposition below which, together with Theorem \ref{thm:cap}, shows the asymptotic optimality of the back-and-forth beamforming scheme presented in Section \ref{sec:scheme} for two-dimensional networks at low SNR and under LOS fading\footnote{Note that for a one-dimensional network in LOS environment, Theorem \ref{thm:cap} allows to recover the result already obtained in \cite{AllaHierBF}.}.

\begin{figure}[t]
 \centering
 \includegraphics[scale=0.7]{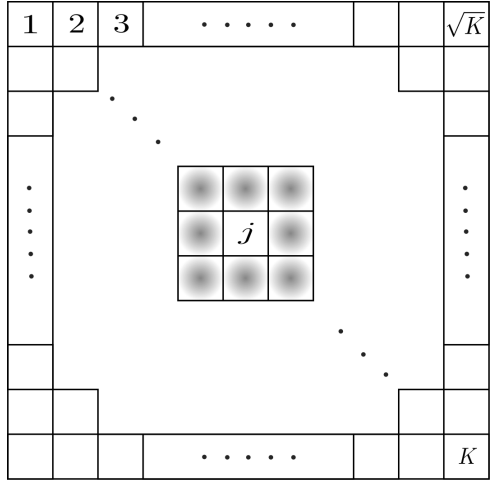}
 \caption{$\sqrt{n}\times\sqrt{n}$ network split into $K$ clusters and numbered in order. As such, $R_j=\{j-\sqrt{K}-1,j-\sqrt{K},j-\sqrt{K}+1,j-1,j,j+1, j+\sqrt{K}-1,j+\sqrt{K},j+\sqrt{K}+1\}$, which represents the center square containing the cluster $j$ and its $8$ neighbors (marked in shades).}
 \label{fig:2}
 \end{figure}
\begin{prop} \label{prop:ub_H}
Let $H$ be the $n \times n$ matrix given by \eqref{eq:fading_matrix}. For every $\varepsilon>0$, there exists a constant $c>0$ such that
$$
\Vert H \Vert^2 \le c \, n^{\frac{1}{2}+\varepsilon}
$$
with high probability as $n$ gets large.
\end{prop}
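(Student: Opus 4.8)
The bounds available for $\Vert H\Vert$ for free — the Frobenius bound $\Vert H\Vert^{2}\le\sum_{j\ne k}1/r_{jk}^{2}$ and the absolute row-sum bound $\Vert H\Vert\le\max_{j}\sum_{k}|h_{jk}|$ — are both of order $n$ up to logarithmic factors, and therefore only yield $\Vert H\Vert^{2}=O(n^{1+\varepsilon})$. The extra factor $n^{1/2}$ must be extracted from the oscillation of the phases $\exp(2\pi i\,r_{jk})$; since different pairs at comparable distances share very few degrees of freedom, one cannot invoke a classical random-matrix estimate, and the plan is instead a multi-scale, recursive decomposition of $H$. Partition the $\sqrt n\times\sqrt n$ square into $K=\sqrt n$ square clusters of side $\ell=n^{1/4}$, each containing $M=\sqrt n$ nodes up to a $(1\pm\delta)$ factor by Lemma~\ref{lem:cluster_size}, and write $H=H_{\mathrm{near}}+H_{\mathrm{far}}$, where $H_{\mathrm{near}}$ keeps the entries $h_{jk}$ with $j,k$ in the same cluster or in two adjacent clusters, and $H_{\mathrm{far}}$ keeps the rest.

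The near part is controlled by the recursion announced in the paper. Up to a permutation of rows and columns, $H_{\mathrm{near}}$ is a sum of $O(1)$ matrices, each block-diagonal with blocks that are themselves LOS matrices of sub-networks of $O(\sqrt n)$ nodes, so that $\Vert H_{\mathrm{near}}\Vert\le C\,f\!\left(O(\sqrt n)\right)$, where $f(m)$ denotes the bound being proven for an $m$-node network; the recursion $m\mapsto O(\sqrt m)$ terminates after $O(\log\log n)$ steps. In fact one level already suffices here: a bare Frobenius estimate on each sub-network gives $\Vert H_{\mathrm{near}}\Vert=O(n^{1/4}\sqrt{\log n})$. It thus remains to show that $\Vert H_{\mathrm{far}}\Vert=O(n^{1/4+\varepsilon})$ with high probability.

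For the far part lies the real work. The point is that it is \emph{not} enough to bound each block $H^{(pq)}$ separately: one always has $\Vert H^{(pq)}\Vert\asymp M/d_{pq}$ (a beamforming vector supported on cluster $q$ realizes this, by the mechanism of Lemma~\ref{lem:Full_Beamforming}), and summing these over $q$ only recovers the trivial $O(\sqrt n)$. The gain must come from the mutual incoherence of the blocks, which I would extract through the Gram matrix: $H_{\mathrm{far}}H_{\mathrm{far}}^{\dagger}=D+E$, where $D$ is diagonal with entries $\sum_{k:\,r_{jk}\gtrsim\ell}1/r_{jk}^{2}=O(\log n)$, and $E_{jj'}=\sum_{k}h_{jk}\,\overline{h_{j'k}}$ for $j\ne j'$, the sum running over the nodes $k$ far from both $j$ and $j'$. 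Each $E_{jj'}$ is an oscillatory sum over well-separated nodes whose phase is $r_{jk}-r_{j'k}$; expanding this phase to the relevant order — here the quadratic ``telescoping'' correction of~\cite{AllaTelescopBF} must be kept when the contributing $k$'s are only $\Theta(\ell)$ away — and using a Bessel-type stationary-phase estimate in the angular variable yields cancellation that decays in the inter-node distance $|x_{j}-x_{j'}|$. Bounding $\Vert E\Vert$ is then again a matrix problem of the same nature — Frobenius is too lossy and one reapplies the clustered decomposition recursively — and leads to $\Vert E\Vert=O(n^{1/2+\varepsilon})$, hence $\Vert H_{\mathrm{far}}\Vert^{2}\le\Vert D\Vert+\Vert E\Vert=O(n^{1/2+\varepsilon})$.

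Putting the two parts together gives $\Vert H\Vert=O(n^{1/4+\varepsilon})$ and hence $\Vert H\Vert^{2}\le c\,n^{1/2+\varepsilon}$ with high probability (renaming $\varepsilon$), the exceptional probabilities being collected by a union bound over the $O(n)$ nodes, the $O(n)$ cluster pairs and the $O(\log\log n)$ recursion levels, together with Lemma~\ref{lem:cluster_size}. I expect the main obstacle to be the cancellation estimates for the off-diagonal Gram entries $E_{jj'}$ — in particular handling the quadratic telescoping correction uniformly, which is the two-dimensional analogue of the one-dimensional phenomenon in~\cite{AllaHierBF,AllaTelescopBF} — together with closing the recursion for $\Vert E\Vert$ without reintroducing a factor $n^{1/2}$ through crude (Frobenius or absolute-sum) bounds.
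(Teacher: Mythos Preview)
Your central claim about the far part --- that ``one always has $\Vert H^{(pq)}\Vert\asymp M/d_{pq}$ (a beamforming vector supported on cluster $q$ realizes this, by the mechanism of Lemma~\ref{lem:Full_Beamforming})'' --- is incorrect, and this misdirects the rest of the argument. Lemma~\ref{lem:Full_Beamforming} requires the transverse extent $w$ of the cluster to satisfy $w^{2}/d=O(1)$; for your \emph{square} clusters of side $n^{1/4}$ at distance $d\ge 2n^{1/4}$ this ratio is $\Theta(n^{1/4})$, so the phases $r_{jk}-x_k$ spread over an interval of length $\Theta(n^{1/4})$ and no single beamforming vector aligns them. The right order is in fact $\Vert H^{(pq)}\Vert=O\!\big(\sqrt{M^{1+\epsilon}/d_{pq}}\big)$ for $2\sqrt{M}\le d_{pq}\le M$, and establishing precisely this is the technical heart of the paper (Lemma~\ref{lem:blockNorm}, proved via the moments method $\Vert\widehat H\Vert^{2}\le\big(\EE\,\Tr(\widehat H\widehat H^{\dagger})^{\ell}\big)^{1/\ell}$ together with the oscillatory-integral estimate of Lemma~\ref{lem:mu}). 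In other words, the cancellation you are looking for already lives at the level of the block norms, not only at the level of the Gram matrix $E$.

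With that block bound in hand, the paper does exactly what you dismissed: it aggregates the blocks by the block-Ger\v{s}gorin inequality (Lemma~\ref{lem:Gersgorin}), obtaining
\[
\Vert H\Vert \le 9\max_{j}\Vert H(R_j)\Vert + \sqrt{c\,n^{\epsilon}}\,\frac{n^{3/4}}{M^{1/2}},
\]
and then recurses on the $9M\times 9M$ near-diagonal pieces $H(R_j)$. A single fixed choice $M=\sqrt n$ is not enough here --- you are right that $\sum_{q}\sqrt{M/d_{pq}}$ is still $\Theta(n^{1/2})$ for that $M$ --- but the paper does not fix $M$: it optimizes $M$ at each step (Lemma~\ref{lem:recursion}), producing the iteration $b\mapsto f(b)=3b/(4b+2)$ which drives the exponent from the trivial $b_0=1/2$ down to the fixed point $b^{*}=1/4$. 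Your near-part treatment (Frobenius on $\sqrt n$-node sub-networks) is fine, but is subsumed by this recursion.

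As for your alternative Gram-matrix route $H_{\mathrm{far}}H_{\mathrm{far}}^{\dagger}=D+E$: it may be salvageable, but as written it is both undermotivated (the block-wise approach \emph{does} work) and unclosed. You propose to bound $\Vert E\Vert$ by ``reapplying the clustered decomposition recursively'', yet $E$ is an $n\times n$ matrix whose entries $E_{jj'}=\sum_{k}h_{jk}\overline{h_{j'k}}$ do not have the LOS structure \eqref{eq:fading_matrix}, so it is not the same problem at a smaller scale and you have not said what recursion would actually close. The paper's recursion, by contrast, stays entirely within matrices of the form \eqref{eq:fading_matrix}.
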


Analyzing directly the the asymptotic behavior of $\Vert H \Vert$ reveals itself difficult. We therefore decompose our proof into simpler subproblems. The first building block of the proof is the following Lemma, which can be viewed as a generalization of the classical Ger\v{s}gorin discs' inequality.

\begin{lem} \label{lem:Gersgorin}
Let $B$ be an $n \times n$ matrix decomposed into blocks $B_{jk}$, $j,k=1,\ldots,K$, each of size $M \times M$, with $n=K M$. Then
$$
\Vert B \Vert \le \max \left\{ \max_{1 \le j \le K} \sum_{k=1}^K \Vert B_{jk} \Vert, \max_{1 \le j \le K} \sum_{k=1}^K \Vert B_{kj} \Vert \right\}
$$
\end{lem}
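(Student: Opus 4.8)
The plan is to bound the spectral norm of the block matrix $B$ by comparing it with a $K \times K$ auxiliary matrix whose entries are the operator norms of the blocks. Concretely, I would introduce the nonnegative matrix $\widehat{B} \in \mathbb{R}^{K \times K}$ defined by $\widehat{B}_{jk} = \Vert B_{jk} \Vert$, and show first that $\Vert B \Vert \le \Vert \widehat{B} \Vert$, where the right-hand norm is the ordinary spectral norm of a $K \times K$ matrix. Once this reduction is in place, the conclusion follows from the classical fact that for any matrix, $\Vert \widehat{B} \Vert \le \sqrt{\Vert \widehat{B} \Vert_1 \, \Vert \widehat{B} \Vert_\infty}$ (max absolute column sum times max absolute row sum), and since both of those quantities are among the two maxima on the right-hand side of the statement, their geometric mean --- hence $\Vert \widehat B\Vert$ --- is at most their maximum. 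Actually, to land exactly on the stated bound it is cleaner to avoid the $\ell_1$–$\ell_\infty$ interpolation and instead argue directly, as below.

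The core step, $\Vert B \Vert \le \Vert \widehat B \Vert$, I would prove via the variational characterization $\Vert B \Vert = \sup_{\Vert u \Vert = \Vert v \Vert = 1} |u^\dagger B v|$. Write $u = (u_1,\ldots,u_K)$ and $v = (v_1,\ldots,v_K)$ in block form with $u_j, v_k \in \mathbb{C}^M$, and set $a_j = \Vert u_j \Vert$, $b_k = \Vert v_k \Vert$, so that $a = (a_1,\ldots,a_K)$ and $b=(b_1,\ldots,b_K)$ are unit vectors in $\mathbb{R}^K$. Then
$$
|u^\dagger B v| = \left| \sum_{j,k} u_j^\dagger B_{jk} v_k \right| \le \sum_{j,k} \Vert u_j \Vert \, \Vert B_{jk} \Vert \, \Vert v_k \Vert = \sum_{j,k} a_j \, \widehat{B}_{jk} \, b_k = a^\dagger \widehat{B} \, b \le \Vert \widehat{B} \Vert,
$$
using the submultiplicativity of the operator norm blockwise and then Cauchy–Schwarz / the definition of $\Vert \widehat B\Vert$ for the nonnegative matrix $\widehat B$. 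Taking the supremum over $u,v$ gives $\Vert B \Vert \le \Vert \widehat B \Vert$.

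It remains to bound $\Vert \widehat B \Vert$ by $\max\{\max_j \sum_k \widehat B_{jk},\ \max_j \sum_k \widehat B_{kj}\}$. Here I would invoke the standard estimate $\Vert \widehat B \Vert^2 \le \Vert \widehat B \Vert_1 \Vert \widehat B \Vert_\infty$ where $\Vert \cdot \Vert_1$ is the maximum column-sum and $\Vert \cdot \Vert_\infty$ the maximum row-sum (itself proved by, e.g., writing $\Vert \widehat B\Vert^2 = \Vert \widehat B^\dagger \widehat B\Vert$ and applying Gershgorin, or by a direct Cauchy–Schwarz argument on $\widehat B^\dagger \widehat B x$). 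Since $\Vert \widehat B\Vert_1 = \max_{1\le j\le K}\sum_{k=1}^K \Vert B_{kj}\Vert$ and $\Vert \widehat B\Vert_\infty = \max_{1\le j\le K}\sum_{k=1}^K \Vert B_{jk}\Vert$, we get $\Vert \widehat B\Vert \le \sqrt{\Vert \widehat B\Vert_1 \Vert \widehat B\Vert_\infty} \le \max\{\Vert \widehat B\Vert_1, \Vert \widehat B\Vert_\infty\}$, which is precisely the claimed bound; combined with $\Vert B\Vert \le \Vert \widehat B\Vert$ this finishes the proof. There is no real obstacle here --- the only point requiring a little care is making sure the two directions (row sums vs.\ column sums) are kept straight when passing to $\widehat B^\dagger$, and recalling/citing the elementary $\Vert\cdot\Vert_1\Vert\cdot\Vert_\infty$ bound so that the argument is self-contained. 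This is genuinely just a blockwise version of the observation that Gershgorin's theorem controls $\Vert A^\dagger A\Vert$, so "generalization of Gershgorin discs" is the right framing.
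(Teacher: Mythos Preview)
Your proof is correct and takes a somewhat different route from the paper's. The paper first establishes the inequality directly for Hermitian positive semi-definite $B$ by the block-eigenvector argument (choose the block $u_j$ of an eigenvector with largest norm and apply the triangle inequality, exactly mimicking the scalar Ger\v{s}gorin proof), and then treats the general case by applying this to $BB^\dagger$ and expanding $(BB^\dagger)_{jk}=\sum_l B_{jl}B_{kl}^\dagger$ to obtain $\Vert B\Vert^2\le\big(\max_j\sum_l\Vert B_{jl}\Vert\big)\big(\max_j\sum_k\Vert B_{kj}\Vert\big)$. Your approach instead factors through the auxiliary nonnegative $K\times K$ matrix $\widehat B$ with $\widehat B_{jk}=\Vert B_{jk}\Vert$: the variational bound $\Vert B\Vert\le\Vert\widehat B\Vert$ cleanly separates the block-to-scalar reduction from the purely scalar estimate $\Vert\widehat B\Vert^2\le\Vert\widehat B\Vert_1\Vert\widehat B\Vert_\infty$. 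The two arguments are morally the same --- unwinding $\Vert\widehat B\Vert^2\le\Vert\widehat B\Vert_1\Vert\widehat B\Vert_\infty$ via $\widehat B^\dagger\widehat B$ reproduces precisely the paper's expansion of $(BB^\dagger)_{jk}$ --- but your framing is more modular and makes the ``block Ger\v{s}gorin'' slogan literal: the block matrix inherits its bound from its scalar shadow $\widehat B$. The paper's version, on the other hand, avoids naming $\widehat B$ and any appeal to the $\ell_1$--$\ell_\infty$ interpolation inequality, keeping everything self-contained at the block level.
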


The proof of this Lemma is relegated to the Appendix. The second building block of this proof is the following lemma, the proof of which is also given in the Appendix.

\begin{lem}\label{lem:blockNorm}
Let $\widehat{H}$ be the $M \times M$ channel matrix between two square clusters of $M$ nodes distributed uniformly at random, each of area $A=M$. Then there exists a constant $c>0$ such that
$$
\Vert \widehat{H} \Vert^2 \le c \, \frac{M^{1+\epsilon}}{d}
$$
with high probability as $M$ gets large, where $2\sqrt{M} \le d \le M$ denotes the distance between the centers of the two clusters. 
\end{lem}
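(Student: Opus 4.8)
The plan is to bound $\Vert\widehat{H}\Vert^2=\Vert\widehat{H}\widehat{H}^\dagger\Vert$ by the largest absolute row sum of the Hermitian matrix $\widehat{H}\widehat{H}^\dagger$ (the scalar case of Lemma~\ref{lem:Gersgorin}), and to show this row sum is $O(M^{1+\epsilon}/d)$ with high probability. Write $\mathcal{R},\mathcal{T}$ for the two clusters, index $\widehat{H}$ by $\mathcal{R}\times\mathcal{T}$, denote by $P_j$ the position of node $j\in\mathcal{R}$ and by $x_k$ the i.i.d.\ uniform position of node $k\in\mathcal{T}$, and choose coordinates so the cluster centres lie on a horizontal axis, with $q_j$ the transverse coordinate of $P_j$. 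The diagonal entry $(\widehat{H}\widehat{H}^\dagger)_{jj}=\sum_k r_{jk}^{-2}=\Theta(M/d^2)$ is deterministically negligible. The off-diagonal entry is the exponential sum
$$
(\widehat{H}\widehat{H}^\dagger)_{jj'}=\sum_{k}\frac{e^{2\pi i\,\psi_{jj'}(x_k)}}{r_{jk}\,r_{j'k}},\qquad \psi_{jj'}(x):=|x-P_j|-|x-P_{j'}|,
$$
which I would split into its conditional mean $\int_{\mathcal{T}}e^{2\pi i\psi_{jj'}(x)}\bigl(|x-P_j|\,|x-P_{j'}|\bigr)^{-1}dx$ (using that $\mathcal{T}$ has unit node density, so the mean is the integral, not $M$ times the average) plus a fluctuation; the fluctuation is a sum of $M$ independent terms of modulus $O(1/d^2)$, so Hoeffding's inequality and a union bound over the $\le M^2$ pairs $(j,j')$ control it by $O\bigl(\sqrt{M}\,\mathrm{polylog}/d^2\bigr)$, uniformly, with high probability.

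The core is the oscillatory integral for the mean. The key point is that $\psi_{jj'}$ is a \emph{difference} of the distances to the two \emph{nearby} foci $P_j,P_{j'}$, both at distance $\Theta(d)$ from $\mathcal{T}$, so on $\mathcal{T}$ one has $|\nabla\psi_{jj'}|\asymp|q_j-q_{j'}|/d$ and $\Vert\nabla^2\psi_{jj'}\Vert=O(|P_j-P_{j'}|/d^2)$ --- both far smaller than the corresponding quantities for a single distance function. A single integration by parts (non-stationary phase) then gives $\bigl|\int_{\mathcal{T}}e^{2\pi i\psi_{jj'}}\bigr|\lesssim\min\!\bigl(M,\;d\sqrt M/|q_j-q_{j'}|\bigr)$, the slowly varying weight $\Theta(1/d^2)$ being harmless. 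When $q_j\approx q_{j'}$ the gradient of $\psi_{jj'}$ can vanish inside $\mathcal{T}$; there $\psi_{jj'}$ is nearly constant along the axis and has transverse second derivative $\asymp|P_j-P_{j'}|/d^2$, so van der Corput's lemma (second-derivative version) applied fibrewise yields the complementary bound $\lesssim\min\!\bigl(M,\;d\sqrt M/\sqrt{|P_j-P_{j'}|}\bigr)$. Crucially this route uses only lower bounds on first or second derivatives of the \emph{true} phase --- no polynomial approximation of $\psi_{jj'}$ is needed, which is essential since for $d$ of order $\sqrt M$ the phase swings by $\Theta(\sqrt M)$ across $\mathcal{T}$.

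Plugging these bounds into the row sum, it remains to sum over $j'$. For each scale $\delta$ the number of $j'$ with $|q_j-q_{j'}|\le\delta$ (and similarly in the longitudinal coordinate) concentrates around $\Theta(\sqrt M\,\delta)$ by the thin-strip version of Lemma~\ref{lem:cluster_size}, applied uniformly over $j$ and over dyadic $\delta$. The $|q_j-q_{j'}|>d/\sqrt M$ part then contributes $O\bigl((M\log M)/d\bigr)$, the remaining near-axial part $O\bigl(M^{1/4}+M/d\bigr)$, and the fluctuation part $O\bigl(M^{3/2}\,\mathrm{polylog}/d^2\bigr)$; since $d\ge 2\sqrt M$, each of these is $O(M^{1+\epsilon}/d)$ for $M$ large, which proves the claim. (For completeness, the regime $d>M$ --- not covered by the hypothesis --- follows at once from $\Vert\widehat{H}\Vert^2\le\Vert\widehat{H}\Vert_F^2=\sum_{j,k}r_{jk}^{-2}=\Theta(M^2/d^2)$.)

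The main obstacle I expect is the oscillatory-integral estimate, made uniform over all $2\sqrt M\le d\le M$. For $d$ close to $2\sqrt M$ the phase oscillates many times across $\mathcal{T}$ and its curvature is not negligible, so one must genuinely exploit the hyperbolic geometry of the level sets of a difference of two distances --- small gradient \emph{and} small Hessian when $P_j,P_{j'}$ are close --- interpolate cleanly between the transverse decay (in $|q_j-q_{j'}|$) and the longitudinal decay (in $|P_j-P_{j'}|$), handle the near-degenerate stationary set of $\psi_{jj'}$ where the classical bounds break, and keep every estimate uniform over the $\asymp M^2$ pairs with high probability; a dyadic decomposition of $\mathcal{T}$ can serve as a safety net in the borderline cases.
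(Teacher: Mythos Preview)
Your overall strategy---bounding $\|\widehat H\|^2=\|\widehat H\widehat H^\dagger\|$ by its maximal absolute row sum, then splitting each off-diagonal entry $(\widehat H\widehat H^\dagger)_{jj'}$ into its conditional mean (an oscillatory integral over $\mathcal T$) and a Hoeffding-controlled fluctuation---is sound and noticeably more direct than the paper's route. The paper instead slices each square cluster into $\sqrt M$ vertical rectangles of width~$1$, bounds the operator norm between two such rectangles via the \emph{moments method} (an induction on $\ell$ for $\EE\,\Tr((FF^\dagger)^\ell)$, using a one-dimensional integration-by-parts lemma for the phase integrals $S_\ell$), and reassembles with the block Ger\v{s}gorin inequality of Lemma~\ref{lem:Gersgorin}. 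Your mean-plus-fluctuation argument would replace the moments machinery entirely.

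There is, however, a real gap. You assert that on $\mathcal T$ one has $|\nabla\psi_{jj'}|\asymp|q_j-q_{j'}|/d$; this is not right, and the error bites precisely at the borderline $d\sim2\sqrt M$ that you yourself flag. Writing $f(q,p)=(v-q)/r$ with $r^2=(d+u-p)^2+(v-q)^2$, one has $\partial_v\psi=f(q_j,p_j)-f(q_{j'},p_{j'})$ and
\[
|\partial_q f|=\frac{(d+u-p)^2}{r^3}\asymp\frac1d,\qquad |\partial_p f|=\frac{|v-q|\,(d+u-p)}{r^3}\lesssim\frac{\sqrt M}{d^2},
\]
so only $|\partial_v\psi|\ge c_1|q_j-q_{j'}|/d-c_2\sqrt M\,|p_j-p_{j'}|/d^2$ is available. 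At $d=2\sqrt M$ the subtracted term is $\Theta(|p_j-p_{j'}|/d)$, comparable to the main term, and the non-stationary-phase bound fails for every pair whose longitudinal separation $|p_j-p_{j'}|$ is not much smaller than the transverse one---which is most pairs. Integration by parts in $u$ does not rescue this (there is no $|p_j-p_{j'}|/d$ term in $\partial_u\psi=D_j/r_j-D_{j'}/r_{j'}$, both cosines being $1-O(M/d^2)$), and the second-derivative van~der~Corput bound you invoke has the same uniformity issue at $d=2\sqrt M$.

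This is exactly why the paper decomposes $\mathcal R$ into width-$1$ strips first: inside a strip $|p_j-p_{j'}|\le1$, the parasitic term becomes $O(\sqrt M/d^2)\le O(1/d)$, and one obtains the clean bound $|\partial_v\psi|\ge(c_3|q_j-q_{j'}|-1)/d$ of~\eqref{eq:dg_LB}, valid down to $|q_j-q_{j'}|$ of order~$1$. Your argument can be repaired in two ways: either perform the same thin-strip decomposition of $\mathcal R$ before your row-sum analysis (after which the moments method is unnecessary and your route is genuinely shorter than the paper's), or prove the lemma only for $d\ge C\sqrt M$ with $C$ large enough to kill the parasitic term, enlarging the near set $R_j$ in the proof of Proposition~\ref{prop:ub_H} to clusters within distance $C\sqrt M$---still $O(1)$ of them.
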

\begin{proof}[Proof of Proposition \ref{prop:ub_H}]

The strategy for the proof is now the following: in order to bound $\Vert H \Vert$, we divide the matrix into smaller blocks, apply Lemma \ref{lem:Gersgorin} and Lemma \ref{lem:blockNorm} in order to bound the off-diagonal terms $\Vert H_{jk} \Vert$. For the diagonal terms $\Vert H_{jj} \Vert$, we reapply Lemma \ref{lem:Gersgorin} and proceed in a recursive manner, until we reach small size blocks for which a loose estimate is sufficient to conclude.

Let us therefore decompose the network into $K$ clusters of $M$ nodes each, with $n=K M$. By Lemma \ref{lem:Gersgorin}, we obtain
\begin{equation} \label{eq:ub_H}
\Vert H \Vert \le \max \left\{ \max_{1 \le j \le K} \sum_{k=1}^K \Vert H_{jk} \Vert, \max_{1 \le j \le K} \sum_{k=1}^K \Vert H_{kj} \Vert \right\}
\end{equation}
where the $n \times n$ matrix $H$ is decomposed into blocks $H_{jk}$, $j,k=1,\ldots,K$, with $H_{jk}$ denoting the $M \times M$ channel matrix between cluster number $j$ and cluster number $k$ in the network. Let us also denote by $d_{jk}$ the corresponding inter-cluster distance, measured from the centers of these clusters. According to Lemma \ref{lem:blockNorm}, if $d_{jk} \ge 2\sqrt{M}$, then there exists a constant $c>0$ such that
$$
\Vert H_{jk} \Vert^2 \le c \, \frac{M^{1+\epsilon}}{d_{jk}}  \, \le c \, n^{\epsilon} \frac{M}{d_{jk}}
$$
with high probability as $M \to \infty$.

Let us now fix $j \in \{1,\ldots,K\}$ and define $R_j = \{ 1 \le k \le K: d_{jk} < 2\sqrt{M} \}$ and $S_j = \{1 \le k \le K: d_{jk} \ge 2\sqrt{M} \}$ (see Fig. \ref{fig:2}). By the above inequality, we obtain
$$
\sum_{k=1}^K \Vert H_{jk} \Vert \le \sum_{k \in R_j} \Vert H_{jk} \Vert + \sqrt{c \, n^{\epsilon}} \, \sum_{k \in S_j} \sqrt{\frac{M}{d_{jk}}}
$$
with high probability as $M$ gets large. Observe that as there are $8l$ clusters or less at distance $l\sqrt{M}$ from cluster $j$, so we obtain 
\begin{align*}
\sum_{k \in S_j} \sqrt{\frac{M}{d_{jk}}} & \le \sum_{l=2}^{\sqrt{K}} 8l \, \sqrt{\frac{M}{l \sqrt{M}}} = O  \left(M^{1/4} K^{3/4} \right) = O \left( \frac{n^{3/4}}{M^{1/2}} \right) 
\end{align*}
as $K=n/M$. There remains to upper bound the sum over $R_j$. Observe that this sum contains at most 9 terms: namely the term $k=j$ and the 8 terms corresponding to the 8 neighboring clusters of cluster $j$.  It should then be observed that for each $k \in R_j$, $\Vert H_{jk} \Vert \le \Vert H(R_j) \Vert$, where $H(R_j)$ is the $9M \times 9M$ matrix made of the $9 \times 9$ blocks $H_{j_1,j_2}$ such that $j_1,j_2 \in R_j$. Finally, this leads to
$$
\sum_{k=1}^K \Vert H_{jk} \Vert \le 9 \Vert H(R_j) \Vert + \sqrt{c\, n^{\epsilon}} \, \frac{n^{3/4}}{M^{1/2}} 
$$
Using the symmetry of this bound and \eqref{eq:ub_H}, we obtain
\begin{equation} \label{eq:recursion}
\Vert H \Vert \le 9 \, \max_{1 \le j \le K} \Vert H(R_j) \Vert + \sqrt{c\, n^{\epsilon}} \, \frac{n^{3/4}}{M^{1/2}} 
\end{equation}
A key observation is now the following: the $9M \times 9M$ matrix $H(R_j)$ has exactly the same structure as the original matrix $H$. So in order to bound its norm $\Vert H(R_j) \Vert$, the same technique may be reused! This leads to the following recursive Lemma.

\begin{lem} \label{lem:recursion}
Assume there exist constants $c>0$ and $b \in [1/4,1/2]$ such that
\vspace{-1mm}
$$
\Vert H \Vert \le \sqrt{c\, n^{\epsilon}} \; n^b
$$
with high probability as $n$ gets large. Then there exists a constant $c'>0$ such that
\vspace{-1mm}
$$
\Vert H \Vert \le \sqrt{c'\, n^{\epsilon}} \; n^{f(b)} 
$$
with high probability as $n$ gets large, where $f(b)=\frac{3b}{4b+2}<b$.
\end{lem}

\begin{proof}
The assumption made implies that there exist $c>0$ and $b \in [1/4,1/2]$ such that for every $M \times M$ diagonal subblock $H_M$ of the matrix $H$, \vspace{-1mm}
$$
\Vert H_M \Vert \le \sqrt{c\, M^{\epsilon}} \; M^b \le \sqrt{c\, n^{\epsilon}} \; M^b
$$
with high probability as $M$ gets large. Together with \eqref{eq:recursion}, this implies that
\begin{align*}
\Vert H \Vert & \le 9 \, \sqrt{c\, n^{\epsilon}} \; M^b + \sqrt{c\, n^{\epsilon}} \; \frac{n^{3/4}}{M^{1/2}}\\
& = 10\,\sqrt{c\, n^{\epsilon}} \; \left( M^b + \frac{n^{3/4}}{M^{1/2}} \right)
\end{align*}
Choosing $M = \lfloor n^{3/(4b+2)} \rfloor$, we obtain
$$
\Vert H \Vert \le \sqrt{c'\, n^{\epsilon}} \; n^{3b/(4b+2)}.
$$
\end{proof}

Besides, it is easy to check that the assumption of Lemma \ref{lem:recursion} holds with $b=1/2$. Apply for this the slightly modified version of the classical Ger\v{s}gorin inequality (which is nothing but the statement of Lemma \ref{lem:Gersgorin} applied to the case $M=1$):
$$
\Vert H \Vert \le \max\left\{ \max_{1 \le j \le n} \sum_{k=1}^n |h_{jk}|, \max_{1 \le j \le n} \sum_{k=1}^n |h_{kj}| \right\} = \max_{1 \le j \le n} \sum_{k=1 \atop k \ne j}^n \frac{1}{r_{jk}}
$$
For any $1 \le j \le n$, it holds with high probability that for $c$ large enough, 
$$
\sum_{k=1 \atop k \ne j}^n \frac{1}{r_{jk}} \le \sum_{l=1}^{\sqrt{n}} (cl \log n) \, \frac{1}{l} = O(\sqrt{n} \log n)
$$
which implies that $\Vert H \Vert = O  \left( \sqrt{n^{1+\epsilon}} \right)$ for any $\epsilon>0$.

By applying Lemma \ref{lem:recursion} successively, we obtain a decreasing sequence of upper bounds on $\Vert H \Vert$:
$$
\Vert H \Vert \le \sqrt{c\, n^{\epsilon}} \; n^{b_0}, \quad \le \sqrt{c\, n^{\epsilon}} \; n^{b_1}, \quad \le \sqrt{c\, n^{\epsilon}} \; n^{b_2}
$$
where the sequence $b_0=1/2$, $b_1=f(b_0)=3b_0/(4b_0+2)=3/8$, $b_2=f(b_1)=3b_1/(4b_1+2)=9/28$ converges to the fixed point $b^*=f(b^*)=1/4$ (as $f$ is strictly increasing on $[\frac{1}{4},\frac{1}{2}]$ and $f(b) <b$ for every $\frac{1}{4} < b \le \frac{1}{2}$). This finally proves Proposition \ref{prop:ub_H}. \end{proof}

\section{Conclusion}
In this work, we characterize the broadcast capacity of two-dimensional wireless networks at low SNR in line-of-sight environment, which is achieved via a back-and-forth beamforming scheme. We showed that the broadcast capacity is upper bounded by the total power transfer in the network, which in turn is equal to $P \, \Vert H \Vert^2$. We present a detailed analysis of the largest singular value of the fading matrix $H$. We further present a practical broadcasting scheme that guarantees the total power transfer throughout the network. This scheme relies on back-and-forth beamforming among clusters through multiple stage time division channel accesses.

\section{Acknowledgment}
S. Haddad's work is supported by Swiss NSF Grant Nr.~200020-156669.

\appendix 

\begin{proof}[Proof of Lemma \ref{lem:cluster_size}]
The number of nodes in a given cluster is the sum of $n$ independently and identically distributed Bernoulli random variables $B_i$, with $\mathcal{P}(B_i=1)=M/n$. Hence
\begin{align*}
& \mathbb{P}\left(\sum_{i=1}^n B_i\geq (1+\delta)M\right)\\
& =\mathbb{P}\left(\exp\left(s\sum_{i=1}^n B_i\right)\geq \exp(s(1+\delta)M)\right)\\
& \leq \mathbb{E}^n(\exp(sB_1))\exp(-s(1+\delta)M)\\
& = \left(\frac{M}{n}\exp(s)+1-\frac{M}{n}\right)^n\exp(-s(1+\delta)M)\\
& \leq \exp(-M(s(1+\delta)-\exp(s)+1))=\exp(-M\Delta_+(\delta))
\end{align*}
where $\Delta_+(\delta)=(1+\delta)\log(1+\delta)-\delta$ by choosing $s=\log(1+\delta)$. The proof of the lower bound follows similarly by considering the random variables $-B_i$. The conclusion follows from the union bound.
\end{proof}
\begin{figure}
\centering
\includegraphics[scale=0.7]{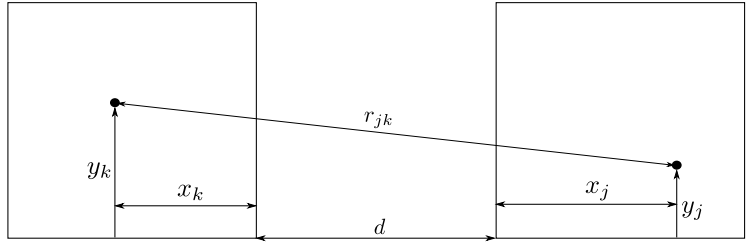}
\caption{Coordinate system.}
\label{fig:3}
\end{figure}  
\begin{proof}[Proof of Lemma \ref{lem:Full_Beamforming}]
We present lower and upper bounds on the distance $r_{jk}$ separating a receiving node $j\in \mathcal{R}_i$ and a transmitting node $k\in\mathcal{T}_i$. Denote by $x_j$, $x_k$, $y_j$, and $y_k$ the horizontal and the vertical positions of nodes $j$ and $k$, respectively (as shown in Fig. \ref{fig:3}). An easy lower bound on $r_{jk}$ is
$$
r_{jk}\geq x_k+x_j+d
$$
On the other hand, using the inequality $\sqrt{1+x} \le 1 + \frac{x}{2}$, we obtain
\begin{align*}
r_{jk}&=\sqrt{(x_k+x_j+d)^2+(y_j-y_k)^2}\\
&=\left(x_k+x_j+d\right)\sqrt{1+\frac{(y_j-y_k)^2}{(x_k+x_j+d)^2}}\\
&\leq x_k+x_j+d+\frac{(y_j-y_k)^2}{2d}\leq x_k+x_j+d+\frac{1}{2c_1^2}.
\end{align*}
Therefore, $$0\leq r_{jk}-x_k-x_j-d \leq \frac{1}{2c_1^2}.$$ After bounding $r_{jk}$, we can proceed to the proof of the lemma as follows:
\begin{align*}
\left|\sum_{k\in \mathcal{T}_i} \frac{\exp(2\pi i (r_{jk}-x_k))}{r_{jk}}\right|&=  \left|\sum_{k\in \mathcal{T}_i} \frac{\exp(2\pi i (r_{jk}-x_k-x_j-d))}{r_{jk}}\right|\\
&\geq \Re \left(\sum_{k\in \mathcal{T}_i} \frac{\exp(2\pi i (r_{jk}-x_k-x_j-d))}{r_{jk}}\right)\\
&\geq \sum_{k\in \mathcal{T}_i} \frac{\cos\left(\frac{\pi}{ c_1^2}\right)}{r_{jk}}\ge K_1\frac{M}{d},
\end{align*}
when the constant $c_1$ is chosen sufficiently large so that $\cos\left(\frac{\pi}{ c_1^2}\right)>0$.
\end{proof}

\begin{proof}[Proof of Lemma \ref{lem:Interference}]
There are $N_C$ clusters transmitting simultaneously. Except for the horizontally adjacent cluster of a given cluster pair ($i$-th cluster pair), all the rest of the transmitting clusters are considered as interfering clusters (there are $N_C-1$ of these). With high probability, each cluster contains $\Theta(M)$ nodes. For the sake of clarity, we assume here that every cluster contains exactly $M$ nodes, but the argument holds in the general case. In this lemma, we upper bound the magnitude of interfering signals from the simultaneously interfering clusters at node $j\in\mathcal{R}_i$ as follows
\begin{align*}
\left|\sum_{\substack{l=1\\ l\neq i}}^{N_C}\sum_{k\in \mathcal{T}_l} \frac{\exp(2\pi i (r_{jk}-x_k))}{r_{jk}}\right| & \le \sum_{\substack{l=1\\ l\neq i}}^{N_C}\left|\sum_{k\in \mathcal{T}_l} \frac{\exp(2\pi (r_{jk}-x_k))}{r_{jk}}\right|\\
&\le \sum_{\substack{l=1\\ l\neq i}}^{N_C}\left|\sum_{k\in \mathcal{T}_l} \frac{\cos(2\pi (r_{jk}-x_k))}{r_{jk}}\right|+\sum_{\substack{l=1\\ l\neq i}}^{N_C}\left|\sum_{k\in \mathcal{T}_l} \frac{\sin(2\pi (r_{jk}-x_k))}{r_{jk}}\right|\\
&\le 2\sum_{\substack{l=1}}^{N_C}\left|\sum_{k\in \mathcal{T}'_l} \frac{\cos(2\pi (r_{jk}-x_k))}{r_{jk}}\right|+2\sum_{\substack{l=1}}^{N_C}\left|\sum_{k\in \mathcal{T}'_l} \frac{\sin(2\pi (r_{jk}-x_k))}{r_{jk}}\right|
\end{align*}
where $\mathcal{T}'_l$ denotes the $l$-th interfering transmit cluster that is at a vertical distance of $l \left(\frac{n^{1/4}}{2c_1}+c_2 n^{1/4+\epsilon}\right)$ from the desired receiving cluster $\mathcal{R}_i$. We further upper bound the first term (cosine terms) in the equation above as follows (notice that we can upper bound the second term (sine terms) in exactly the same fashion):
\begin{align}\nonumber
\left|\sum_{k\in \mathcal{T}'_l} \frac{\cos(2\pi (r_{jk}-x_k))}{r_{jk}}\right|&=\left|\sum_{k\in \mathcal{T}'_l} X^{(l)}_k\right|\\\nonumber
&=\left|\sum_{k\in \mathcal{T}'_l}\left(X_k-\mathbb{E} \left(X^{(l)}_k\right)\right)+\sum_{k\in \mathcal{T}'_l}\mathbb{E} \left(X^{(l)}_k
\right)\right| \\\nonumber
&\overset{(a)}{\le} \left|\sum_{k\in \mathcal{T}'_l}\left(X^{(l)}_k-\mathbb{E} \left(X^{(l)}_k\right)\right)\right|+\left|\sum_{k\in \mathcal{T}'_l}\mathbb{E} \left(X^{(l)}_k\right)\right|\\\label{eq:Expectation_X1}
&\overset{(b)}{=} M \left|\frac{1}{M} \sum_{k\in \mathcal{T}'_l} \left(X^{(l)}_k-\mathbb{E} \left(X^{(l)}_k\right)\right) \right|+M \left|\mathbb{E} \left(X^{(l)}_1\right)\right|
\end{align}
where $(a)$ follows from the triangle inequality and $(b)$ results from the fact that the $X^{(l)}_k$'s (note that $X^{(l)}_k=(\cos(2\pi (r_{jk}-x_k)))/(r_{jk})\,\,\forall k\in\mathcal{T}'_l$) are independent and identically distributed. Let us first bound the second term of \eqref{eq:Expectation_X1}: $\forall k\in\mathcal{T}'_l$, we have
$$
|r_{jk}| = r_{jk} = \sqrt{(x_k+x_j+d)^2+(y_j-y_k)^2} \ge d = \frac{n^{1/2}}{4}$$
is a $C^2$ function and 
\begin{align*}
|r_{jk}'(y_k)|&=\left|\frac{\partial\,r_{jk} }{\partial{y_k}}\right| = \frac{|y_k-y_j|}{r_{jk}} \\
&\ge \frac{l \, c_2 \, n^{1/4+\epsilon}+(l-1) \, \frac{n^{1/4}}{2c_1}}{n^{1/2}}\\
&\ge {l \, c_2 \, n^{-1/4+\epsilon}}
\end{align*}
Moreover, $r_{jk}''$ changes sign at most twice. By the integration by parts formula, we obtain
\begin{align*}
\int_{{y_k}_0}^{{y_k}_1} \,dy_k \frac{\cos(2\pi  r_{jk})}{r_{jk}} & =
\int_{{y_k}_0}^{{y_k}_1} dy_k \, \frac{2 \pi  r'_{jk}}{2 \pi   r'_{jk}r_{jk}} \, \cos( 2\pi   r_{jk})\\
&=  \frac{ -\sin( 2 \pi  r_{jk})}{2 \pi   r'_{jk}r_{jk}} \bigg|_{{y_k}_0}^{{y_k}_1} + \frac{1}{2\pi }\int_{{y_k}_0}^{{y_k}_1} dy_k \, \frac{r_{jk}r''_{jk}+(r'_{jk})^2}{(r'_{jk}r_{jk})^2} \, \sin({ 2 \pi  r_{jk}})
\end{align*}
which in turn yields the upper bound
\begin{align*}
\left|\int_{{y_k}_0}^{{y_k}_1} dy_k \, \frac{\cos({2 \pi  r_{jk}})}{r_{jk}} \right|&\le \frac{1}{2 \pi} \, \Bigg( \frac{2}{\min_{y_k}\{|r'_{jk}||r_{jk}|\}} 
+ \int_{{y_k}_0}^{{y_k}_1} dy_k \, \frac{|r''_{jk}|}{(r'_{jk})^2|r_{jk}|} + \int_{{y_k}_0}^{{y_k}_1} dy_k \, \frac{1}{r_{jk}^2} \Bigg)\\
&\le \frac{1}{2 \pi} \, \Bigg( \frac{4}{l \, c_2 \, n^{1/4+\epsilon}} 
+ \frac{1}{\min_{y_k}\{|r_{jk}|\}}\int_{{y_k}_0}^{{y_k}_1} dy_k \, \frac{|r''_{jk}|}{(r'_{jk})^2} + \frac{|{{y_k}_1}-{{y_k}_0}|}{\min_{y_k}\{r_{jk}^2\}} \Bigg)\\
&\le \frac{1}{2 \pi} \, \Bigg( \frac{4}{l \, c_2 \, n^{1/4+\epsilon}} 
+ \frac{4}{l \, c_2 \, n^{1/4+\epsilon}} + \frac{2}{n^{3/4}} \Bigg)
\le \frac{9/(2\pi)}{l \, c_2 \, n^{1/4+\epsilon}}.
\end{align*}
Therefore, for any $k\in\mathcal{T}'_l$,
\begin{align}\nonumber
\bigg|\mathbb{E}\left(X^{(l)}_k\right)\bigg|&=\left|\frac{4}{n^{1/2}}\int_0^{\frac{n^{1/2}}{4}}\,dx_k\frac{1}{|{y_k}_1-{y_k}_0|}\int_{{y_k}_0}^{{y_k}_1} dy_k \, \frac{\cos({2 \pi  r_{jk}})}{r_{jk}} \right|\\\nonumber
&\le \frac{4}{n^{1/2} \, |{y_k}_1-{y_k}_0|}\int_0^{\frac{n^{1/2}}{4}}\,dx_k \left|\int_{{y_k}_0}^{{y_k}_1} dy_k \, \frac{\cos({2 \pi  r_{jk}})}{r_{jk}} \right|\\
&\le \frac{9/(2\pi)}{|{y_k}_1-{y_k}_0| \, l \, c_2 \, n^{1/4+\epsilon}}
\le \frac{9c_1}{\pi c_2}\frac{1}{l \, n^{1/2+\epsilon}}=\frac{9c_1}{\pi c_2} \frac{1}{l \, d \, n^{\epsilon}}.\label{eq:Exp_1}
\end{align}
We further upper bound the first term in \eqref{eq:Expectation_X1} by using the Hoeffding's inequality \cite{Hoeffding}. Note that the $X^{(l)}_k$'s are i.i.d. and integrable random variables such that for any $ 1\le l\le N_C$ and $\forall k\in \mathcal{T}'_l$, we have $X^{(l)}_k \in [-1/d,1/d]$. As such, Hoeffding's inequality yields
\begin{align*}
\mathbb{P}\left(\left |\frac{1}{M}\sum_{k\in \mathcal{T}'_l}\left(X^{(l)}_k-\mathbb{E} \left(X^{(l)}_k\right)\right)\right|>t\right)&\leq 2\, \exp\left(-\frac{M \, t^2}{2/d^2}\right)\\
&=2\,\exp\left(-\frac{1}{2}M \, d^2 \, t^2\right)\\
&\overset{(a)}{=}2\exp(-n^{\epsilon}),
\end{align*}  
where $(a)$ is true if $t=\frac{1}{d}\sqrt{\frac{2n^{\epsilon}}{{M}}}$. Therefore, we have
\begin{align}\label{eq:Exp_2}
\left| \frac{1}{M} \sum_{k\in \mathcal{T}'_l} \left( X^{(l)}_k-\mathbb{E} \left(X^{(l)}_k\right) \right) \right|&\leq \frac{1}{d}\sqrt{\frac{2n^{\epsilon}}{{M}}}
\end{align}
with probability $\ge 1-2\exp(-n^{\epsilon})$. Combining \eqref{eq:Exp_1} and \eqref{eq:Exp_2}, we can upper bound \eqref{eq:Expectation_X1} as follows
\begin{align*}
\left|\sum_{k\in \mathcal{T}'_l} \frac{\cos(2\pi (r_{jk}-x_k))}{r_{jk}}\right|&\leq M \, \left|\frac{1}{M}\sum_{k\in \mathcal{T}'_l} \left(X^{(l)}_k-\mathbb{E} \left(X^{(l)}_k\right)\right) \right|+M \, \left|\mathbb{E} \left(X^{(l)}_1\right)\right|\\
&\leq \frac{M}{d}\sqrt{\frac{2n^{\epsilon}}{{M}}}+\frac{9c_1}{\pi c_2} \frac{M}{l \, d \, n^{\epsilon}}.
\end{align*}
Finally, we have
\begin{align*}
\left|\sum_{\substack{l=1\\ l\neq i}}^{N_C}\sum_{k\in \mathcal{T}_l} \frac{\exp(2\pi i (r_{jk}-x_k))}{r_{jk}}\right|&\leq 2\sum_{l=1}^{N_C}\left|\sum_{k\in \mathcal{T}'_l} \frac{\cos(2\pi (r_{jk}-x_k))}{r_{jk}}\right|+
2\sum_{{l=1}}^{N_C}\left|\sum_{k\in \mathcal{T}'_l} \frac{\sin(2\pi (r_{jk}-x_k))}{r_{jk}}\right|\\
&\overset{(a)}{\leq} 4\sum_{{l=1}}^{N_C}\left(\frac{M}{d}\sqrt{\frac{2n^{\epsilon}}{{M}}}+\frac{9c_1}{\pi c_2} \frac{M}{l \, d \, n^{\epsilon}}\right)\\
&\leq 4\sqrt{2}\,\frac{N_C \, \sqrt{n^{\epsilon} \, M}}{d} + \frac{36c_1}{\pi c_2} \frac{M}{d \, n^{\epsilon}}\log n\\
&\le \left(4\sqrt{2} \frac{N_C\,n^{3\epsilon/2}}{\sqrt{M}\,\log n}+\frac{36c_1}{\pi c_2}\right) \frac{M}{d \, n^{\epsilon}}\log n\\
&=\left(\Theta\left(\frac{n^{1/4-\epsilon}n^{3\epsilon/2}}{n^{3/8}\log n}\right)+\Theta(1)\right) \frac{M}{d \, n^{\epsilon}}\log n = \Theta\left(\frac{M}{d \, n^{\epsilon}}\log n\right),
\end{align*}
where $(a)$ is true with high probability (more precisely, with probability $\ge 1-4\, N_C \, \exp(-n^{\epsilon})$), which concludes the proof.
\end{proof}

\begin{proof}[Proof of Lemma \ref{lem:Gersgorin}]
- Let us first consider the case where $B$ is a Hermitian and positive semi-definite matrix. Then $\Vert B \Vert=\lambda_{\max}(B)$, the largest eigenvalue of $B$. Let now $\lambda$ be an eigenvalue of $B$ and $u$ be its corresponding eigenvector, so that $\lambda u = Bu$. Using the block representation of the matrix $B$, we have
$$
\lambda \, u_j = \sum_{k=1}^K B_{jk} \, u_k, \quad \forall 1 \le j \le K
$$
where $u_j$ is the $j^{th}$ block of the vector $u$. Let now $j$ be such that $\Vert u_j \Vert = \max_{1 \le k \le K} \Vert u_k \Vert$. Taking norms and using the triangle inequality, we obtain
\begin{align*}
|\lambda| \, \Vert u_j \Vert & = \left\Vert \sum_{k=1}^K B_{jk} \, u_k \right\Vert \le \sum_{k=1}^K \Vert B_{jk} \, u_k \Vert\\
& \le \sum_{k=1}^K \Vert B_{jk} \Vert \, \Vert u_k \Vert \le \sum_{k=1}^K \Vert B_{jk} \Vert \, \Vert u_j \Vert
\end{align*}
by the assumption made above. As $u \not\equiv 0$, $\Vert u_j \Vert >0$, so we obtain
$$
|\lambda| \le \max_{1 \le j \le K} \sum_{k=1}^K \Vert B_{jk} \Vert
$$
As this inequality applies to any eigenvalue $\lambda$ of $B$ and $\Vert B \Vert=\lambda_{\max}(B)$, the claim is proved in this case.

- In the general case, observe first that $\Vert B \Vert^2=\lambda_{\max}(BB^{\dagger})$, where $BB^{\dagger}$ is Hermitian and positive semi-definite. So by what was just proved above,
$$
\Vert B \Vert^2 = \lambda_{\max}(BB^{\dagger}) \le \max_{1 \le j \le K} \sum_{k=1}^K \Vert (BB^{\dagger})_{jk} \Vert
$$
Now, $(BB^{\dagger})_{jk} = \sum_{l=1}^K B_{jl} B_{kl}^{\dagger}$ so
\begin{align*}
& \sum_{k=1}^K \Vert (BB^{\dagger})_{jk} \Vert = \sum_{k=1}^K \left\Vert \sum_{l=1}^K B_{jl} B_{kl}^{\dagger} \right\Vert\\
& \le \sum_{k=1}^K \sum_{l=1}^K \Vert B_{jl} \Vert \, \Vert B_{kl} \Vert \le \sum_{l=1}^K \Vert B_{jl} \Vert \, \max_{1 \le j \le K} \sum_{k=1}^K \Vert B_{kj} \Vert 
\end{align*}
and we finally obtain
$$
\Vert B \Vert^2 \le \left( \max_{1 \le j \le K} \sum_{l=1}^K \Vert B_{jl} \Vert  \right) \,  \left( \max_{1 \le j \le K} \sum_{k=1}^K \Vert B_{kj} \Vert \right)
$$
which implies the result, as $ab \le \max\{a,b\}^2$ for any two positive numbers $a,b$.
\end{proof}
\begin{figure}
\centering
\includegraphics[scale=1]{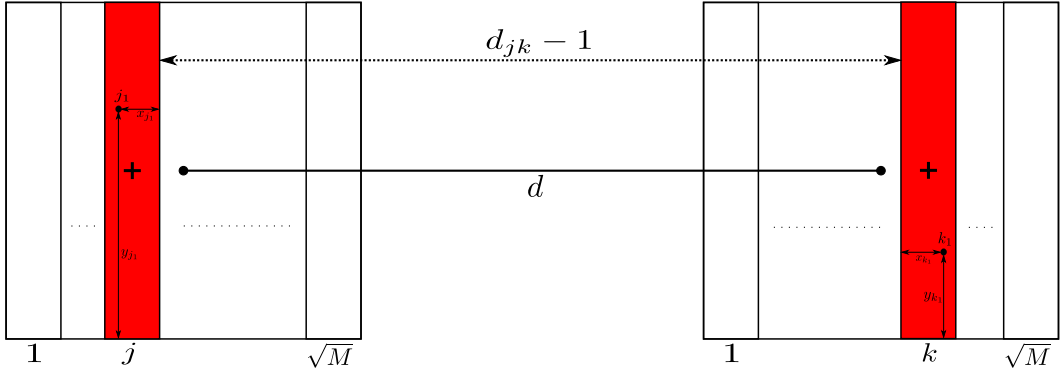}
\caption{Two square clusters that have a center-to-center distance $d$, with each cluster decomposed into  $\sqrt{M}$ vertical $\sqrt{M}\times 1$ rectangles. $d_{jk}$ is distance between the centers (marked with cross) of the two rectangles $j$ and $k$. Moreover, we have the points $j_1(x_{j_1},y_{j_1})$ and $k_1(x_{k_1},y_{k_1})$ in the rectangles $j$ and $k$, respectively.}
\label{fig:4}
\end{figure}
\begin{proof}[Proof of Lemma \ref{lem:blockNorm}]
As in the case of $\Vert H \Vert$, analyzing directly the the asymptotic behavior of $\Vert \widehat{H} \Vert$ reveals itself difficult. We therefore decompose our proof into simpler subproblems. The strategy is essentially the following: in order to bound $\Vert \widehat{H} \Vert$, we divide the matrix into smaller blocks, bound the smaller blocks $\Vert {\widehat{H}}_{jk} \Vert$, and apply Lemma \ref{lem:Gersgorin}. Let us therefore decompose each of the two square clusters into $\sqrt{M}$ vertical $\sqrt{M}\times 1$ rectangles of $\sqrt{M}$ nodes each (See Fig. \ref{fig:4}).

By Lemma \ref{lem:Gersgorin}, we obtain
\begin{equation} \label{eq:ub_H0}
\Vert \widehat{H} \Vert \le \max \left\{ \max_{1 \le j \le \sqrt{M}} \sum_{k=1}^{\sqrt{M}} \Vert \widehat{H}_{jk} \Vert, \max_{1 \le j \le \sqrt{M}} \sum_{k=1}^{\sqrt{M}} \Vert \widehat{H}_{kj} \Vert \right\}
\end{equation}
where the $M \times M$ matrix $\widehat{H}$ is decomposed into blocks $\widehat{H}_{jk}$, $j,k=1,\ldots,\sqrt{M}$, with $\widehat{H}_{jk}$ denoting the $\sqrt{M} \times \sqrt{M}$ channel matrix between $k$-th rectangle of the transmitting cluster and the $j$-th rectangle of the receiving cluster. As shown in Fig. \ref{fig:4}, let us also denote by $d_{jk}$ the corresponding inter-rectangle distance, measured from the centers of the two rectangles. We want to show that for $2\sqrt{M}\le d \le M$, where $d$ is the distance between the centers of the two clusters, there exist constants $c,c'>0$ such that
\begin{equation}\label{eq:up_H0jk}
\Vert \widehat{H}_{jk} \Vert^2 \le c' \, \frac{M^{\epsilon}}{d_{jk}} \le c \, \frac{M^{\epsilon}}{d}
\end{equation}
with high probability as $M \to \infty$. Applying \eqref{eq:ub_H0} and \eqref{eq:up_H0jk}, we get
$$
\Vert \widehat{H} \Vert \le \max \left\{ \max_{1 \le j \le \sqrt{M}} \sum_{k=1}^{\sqrt{M}} \Vert \widehat{H}_{jk} \Vert, \max_{1 \le j \le \sqrt{M}} \sum_{k=1}^{\sqrt{M}} \Vert \widehat{H}_{kj} \Vert \right\} \le \left(c\,\frac{M^{1+\epsilon}}{d}\right)^{1/2}
$$
Therefore, what remains to be proven is inequality \eqref{eq:up_H0jk}. The strategy we propose in order to upper bound $\Vert \widehat{H}_{jk} \Vert^2$ is to use the moments' method, relying on the following inequality:
\begin{align*}
\Vert \widehat{H}_{jk} \Vert^2  & = \lambda_{\max}(\widehat{H}_{jk}\widehat{H}_{jk}^{\dagger}) \le \left( \sum_{k=1}^M (\lambda_k(\widehat{H}_{jk} \widehat{H}_{jk}^{\dagger}))^\ell \right)^{1/\ell}\\
& = \left( \Tr \left((\widehat{H}_{jk} \widehat{H}_{jk}^{\dagger})^\ell \right) \right)^{1/\ell}
\end{align*}
valid for any $\ell \ge 1$. So by Jensen's  inequality, we obtain that $\EE( \Vert \widehat{H}_{jk} \Vert^2) \le \left( \EE( \mathrm{Tr}((\widehat{H}_{jk}\widehat{H}_{jk}^{\dagger})^\ell) ) \right)^{1/\ell}$. In what follows, we show that taking $\ell \to \infty$ leads to $\EE( \Vert \widehat{H}_{jk} \Vert^2) \le c \, \frac{\log M}{d_{jk}}$. More precisely, we show that 
\begin{equation}\label{eq:final}
\EE( \mathrm{Tr}((\widehat{H}_{jk}\widehat{H}_{jk}^{\dagger})^\ell)\le \frac{M(c\,\log M)^{\ell-1}}{d_{jk}^{\ell+1}}
\end{equation}
which implies 
\begin{equation*}
\left( \EE( \mathrm{Tr}((\widehat{H}_{jk}\widehat{H}_{jk}^{\dagger})^\ell) ) \right)^{1/\ell} \le \frac{M^{1/\ell}(c\,\log M)^{1-{1/\ell}}}{d_{jk}^{1+{1/\ell}}} \underset{\ell\rightarrow\infty}{\rightarrow} c\,\frac{\log M}{d_{jk}}. 
\end{equation*}
We first prove \eqref{eq:final} for $\ell=\{1,2\}$, then generalize it to any $\ell$. To simplify the notation, let $F=\widehat{H}_{jk}$. For $\ell=1$, we obtain
\begin{equation}\label{eq:Tr_1}
\EE(\mathrm{Tr}(FF^{\dagger}))=\sum_{j_1,k_1=1}^{\sqrt{M}}\EE(f_{j_1k_1}f_{j_1k_1}^*) =\sum_{j_1,k_1=1}^{\sqrt{M}}\EE(|f_{j_1k_1}|^2) =\sum_{j_1,k_1=1}^{\sqrt{M}}\frac{1}{r_{j_1k_1}^2}\le\frac{M}{d_{jk}^2}
\end{equation}
Note here that given the definition of $d_{jk}$, it only holds that $r_{j_1k_1} \ge d_{jk}-1$ and not $d_{jk}$. However, given our assumption that $d_{jk} \ge \sqrt{M}$, this simplification does not matter asymptotically and also allows to lighten the notation. We will make  this simplification constantly in the following. For $\ell=2$, we obtain 
\begin{align*}
\EE(\mathrm{Tr}((FF^{\dagger})^2))&=\EE(\mathrm{Tr}(FF^{\dagger}FF^{\dagger}))\\
&=\sum_{\substack{j_1,j_2,k_1,k_2=1}}^{\sqrt{M}} \EE(f_{j_1k_1}f_{j_2k_1}^*f_{j_2k_2}f_{j_1k_2}^*) \\
&\le\sum_{\substack{j_1=j_2\\k_1,k_2}} \EE(f_{j_1k_1}f_{j_2k_1}^*f_{j_2k_2}f_{j_1k_2}^*) +\sum_{\substack{j_1, j_2\\k_1=k_2}} \EE(f_{j_1k_1}f_{j_2k_1}^*f_{j_2k_2}f_{j_1k_2}^*) +\sum_{\substack{j_1\neq j_2\\k_1\neq k_2}} \EE(f_{j_1k_1}f_{j_2k_1}^*f_{j_2k_2}f_{j_1k_2}^*)\\
&\leq 2\frac{M^{3/2}}{d_{jk}^4} + M^2 S_2 \overset{(a)}{\le} 2\frac{M}{d_{jk}^3} + M^2 S_2
\end{align*}
where $S_2= |\EE(f_{j_1k_1}f_{j_2k_1}^*f_{j_2k_2}f_{j_1k_2}^*)|$ with $j_1\ne j_2$ and $k_1\neq k_2$ does not depend on the specific choice of $j_1\neq j_2$ and $k_1\ne k_2$, and $(a)$ results from fact that $d_{jk}\ge \sqrt{M}$. In what follows, we upper bound $S_2$. 
\begin{align}\label{eq:S2}\nonumber
S_2 & = |\EE(f_{j_1k_1}f_{j_2k_1}^*f_{j_2k_2}f_{j_1k_2}^*)|\\
& = \bigg| \frac{1}{M^2} \int_0^{1} dx_{j_1}\int_{0}^{\sqrt{M}} dy_{j_1}\int_0^{1} dx_{j_2} \int_{0}^{\sqrt{M}} dy_{j_2} \int_0^{1} dx_{k_1}\int_{0}^{\sqrt{M}} dy_{k_1}\int_0^{1} dx_{k_2} \int_{0}^{\sqrt{M}} dy_{k_2}\, \frac{e^{ 2\pi i (g_{j_1j_2}(k_1)+g_{j_2j_1}(k_2))}}{\rho_{j_1j_2}(k_1)*\rho_{j_2j_1}(k_2)}\bigg|,
\end{align}
where 
\begin{align}\nonumber
g_{j_1j_2}(k_1)&=r_{j_1k_1}-r_{j_2k_1}=-g_{j_2j_1}(k_1)\\
&= \sqrt{(d_{jk}-1+x_{j_1}+x_{k_1})^2+(y_{j_1}-y_{k_1})^2}-\sqrt{(d_{jk}-1+x_{j_2}+x_{k_1})^2+(y_{j_2}-y_{k_1})^2}\label{eq:S2_g}
\end{align}
and
\begin{align}
\rho_{j_1j_2}(k_1) = r_{j_1k_1} \cdot r_{j_2k_1}=\rho_{j_2j_1}(k_1)\ge d_{jk}^2\label{eq:S2_rho},
\end{align}
where $0 \le x_{j_1},x_{j_2},x_{k_1},x_{k_2}\le 1$ and $0 \le y_{j_1},y_{j_2},y_{k_1},y_{k_2}\le \sqrt{M}$ are the horizontal and the vertical positions, respectively (see Fig. \ref{fig:4}).

From now on, let us use the short-hand notation
$$
\int dj \quad \text{for} \quad  \int_{0}^{1} dx_j \int_{0}^{\sqrt{M}} dy_j
$$
Using this short-hand notation as well as equations \eqref{eq:S2_g} and \eqref{eq:S2_rho}, we can rewrite \eqref{eq:S2} as follows 
\begin{align}\nonumber
S_2&= \bigg| \frac{1}{M^2} \int d{j_1} \int d{j_2} \int d{k_1}\, \frac{e^{ 2\pi i g_{j_1j_2}(k_1)}}{\rho_{j_1j_2}(k_1)} \int d{k_2}\, \frac{e^{2\pi i g_{j_2j_1}(k_2)}}{\rho_{j_2j_1}(k_2)}\bigg|\\\nonumber
&\leq \frac{1}{M^2}\int d{j_1} \int d{j_2} \,\bigg| \int d{k_1}\, \frac{e^{2\pi i g_{j_1j_2}(k_1)}}{\rho_{j_1j_2}(k_1)}\bigg| \cdot \bigg|\int d{k_2}\, \frac{e^{2\pi i g_{j_2j_1}(k_2)}}{\rho_{j_2j_1}(k_2)}\bigg|\\\nonumber
&= \frac{1}{M^2}\int d{j_1} \int d{j_2} \,\bigg| \int d{k_1}\, \frac{e^{2\pi i g_{j_1j_2}(k_1)}}{\rho_{j_1j_2}(k_1)}\bigg| \cdot B_{2,1}\\\nonumber
\end{align}
where
\begin{align}\label{eq:B21}\nonumber
B_{2,1}=\bigg|\int d{k_2}\, \frac{e^{2\pi i g_{j_2j_1}(k_2)}}{\rho_{j_2j_1}(k_2)}\bigg|&=\bigg|\int_{0}^{1} dx_{k_2}\int_{0}^{\sqrt{M}} dy_{k_2}\, \frac{e^{2\pi i g_{j_2j_1}(k_2)}}{\rho_{j_2j_1}(k_2)}\bigg|\\\nonumber
&\le \int_{0}^{1} dx_{k_2}\int_{0}^{\sqrt{M}} dy_{k_2}\, \bigg|\frac{e^{2\pi i g_{j_2j_1}(k_2)}}{\rho_{j_2j_1}(k_2)}\bigg| \nonumber\\
&= \int_{0}^{1} dx_{k_2}\int_{0}^{\sqrt{M}} dy_{k_2}\, \frac{1}{\rho_{j_2j_1}(k_2)}\leq \frac{\sqrt{M}}{d_{jk}^2}
\end{align} 
We therefore obtain
\begin{equation}
S_2 \le \frac{1}{M^{3/2} d_{jk}^2} \int d{j_1} \cdot A_{1,2} \label{eq:S2_UB}
\end{equation}
where
$$
A_{1,2} =  \int d{j_2} \,\bigg| \int d{k_1}\, \frac{e^{2\pi i g_{j_1j_2}(k_1)}}{\rho_{j_1j_2}(k_1)}\bigg| 
$$
Before further upper bounding \eqref{eq:S2_UB}, we present the following lemma, taken from \cite{JSAC} and adapted to the present situation.

\begin{lem} \label{lem:mu}
Let $g:[0,\sqrt{M}] \to \mathbb{R}$ be a $C^2$ function such that $|g'(y)| \geq c_1>0$ for all $z \in [0,\sqrt{M}]$ and $g''$ changes sign at most twice on $[0,\sqrt{M}]$ (say e.g.~$g''(y) \ge 0 $ in $[y_-,y_+]$ and $g''(y) \le 0$ outside). Let also $\rho:[0,\sqrt{M}] \to \mathbb{R}$ be a $C^1$ function such that $|\rho(y)| \ge c_2>0$ and $\rho'(y)$ changes sign at most twice on $[0,\sqrt{M}]$. Then
$$
\left| \int_0^{\sqrt{M}} dy \, \frac{e^{ 2 \pi i g(y)}}{\rho(y)} \right| \le \frac{7}{\pi \, c_1 \, c_2}.
$$
\end{lem}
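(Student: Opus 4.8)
The plan is to prove Lemma~\ref{lem:mu} as a one-dimensional van der Corput (non-stationary phase) estimate: since $|g'|\ge c_1>0$ the phase has no stationary point, so a single integration by parts produces a gain of order $1/c_1$, and the only remaining work is to control the total variation of the amplitude that appears, which is exactly where the hypotheses ``$g''$ changes sign at most twice'' and ``$\rho'$ changes sign at most twice'' enter.

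First I would record two elementary reductions. Since $g'$ is continuous with $|g'|\ge c_1>0$ it keeps a constant sign on $[0,\sqrt M]$, and likewise $\rho$ keeps a constant sign; hence $1/g'$, $1/\rho$ and $1/(g'\rho)$ are well-defined $C^1$ functions with $|1/g'|\le 1/c_1$, $|1/\rho|\le 1/c_2$ and $|1/(g'\rho)|\le 1/(c_1c_2)$. Using $\frac{d}{dy}e^{2\pi i g(y)}=2\pi i\,g'(y)\,e^{2\pi i g(y)}$, I would then write $\frac{e^{2\pi i g}}{\rho}=\frac{1}{2\pi i\,g'\rho}\,\frac{d}{dy}e^{2\pi i g}$ and integrate by parts:
$$
\int_0^{\sqrt M}\frac{e^{2\pi i g(y)}}{\rho(y)}\,dy=\left[\frac{e^{2\pi i g(y)}}{2\pi i\,g'(y)\rho(y)}\right]_0^{\sqrt M}-\frac{1}{2\pi i}\int_0^{\sqrt M}e^{2\pi i g(y)}\,\frac{d}{dy}\left(\frac{1}{g'(y)\rho(y)}\right)dy.
$$
The boundary term is at most $\frac{1}{2\pi}\cdot\frac{2}{c_1c_2}$ and the remaining integral is at most $\frac{1}{2\pi}\,V$, where $V$ is the total variation of $1/(g'\rho)$ on $[0,\sqrt M]$, so the lemma reduces to showing $V=O(1/(c_1c_2))$.

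The main step is this total-variation bound, and the trap to avoid is trying to count the sign changes of $(g'\rho)'=g''\rho+g'\rho'$ directly, since a sum of two functions each changing sign at most twice can oscillate much more. Instead I would factor $1/(g'\rho)=(1/g')(1/\rho)$ and use $\mathrm{TV}(fh)\le\|f\|_\infty\,\mathrm{TV}(h)+\|h\|_\infty\,\mathrm{TV}(f)$, which gives $V\le\frac{1}{c_2}\,\mathrm{TV}(1/g')+\frac{1}{c_1}\,\mathrm{TV}(1/\rho)$. Now $(1/g')'=-g''/(g')^2$ has the sign of $-g''$, hence changes sign at most twice, so $1/g'$ is monotone on at most three subintervals; on each such piece its variation is the difference of two values lying in an interval of length $1/c_1$, whence $\mathrm{TV}(1/g')\le 3/c_1$, and symmetrically $\mathrm{TV}(1/\rho)\le 3/c_2$. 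This yields $V\le 6/(c_1c_2)$ and finally
$$
\left|\int_0^{\sqrt M}\frac{e^{2\pi i g(y)}}{\rho(y)}\,dy\right|\le\frac{1}{2\pi}\cdot\frac{2+6}{c_1c_2}=\frac{4}{\pi\,c_1c_2}\le\frac{7}{\pi\,c_1c_2}.
$$

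I expect the only delicate points to be bookkeeping: the ``constant sign'' reductions, the passage from ``$g''$ (resp.\ $\rho'$) changes sign at most twice'' to ``$1/g'$ (resp.\ $1/\rho$) is monotone on at most three pieces'', and the elementary inequality $\mathrm{TV}(\phi)\le 3\,\|\phi\|_\infty$ for a constant-sign $C^1$ function $\phi$ with at most two turning points. None of these is serious, and the computation in fact gives the better constant $4/\pi$, leaving comfortable slack under the stated $7/\pi$. A slightly different route --- splitting $[0,\sqrt M]$ at the sign changes of $g''$ and of $\rho'$ and applying the second mean value theorem for integrals on each of the resulting (at most five) pieces, then applying the classical first-derivative van der Corput bound on each --- also works, at the cost of a worse constant.
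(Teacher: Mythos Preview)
Your proof is correct and follows essentially the same route as the paper: both integrate by parts once and then bound $\int_0^{\sqrt M}\bigl|\frac{d}{dy}\frac{1}{g'(y)\rho(y)}\bigr|\,dy$ by exploiting the limited sign changes of $g''$ and $\rho'$; the paper carries this out by expanding $(1/(g'\rho))'=-g''/((g')^2\rho)-\rho'/(g'\rho^2)$ and explicitly splitting each integral at the sign-change points, whereas you package the identical computation as a total-variation estimate via the product inequality $\mathrm{TV}(fh)\le\|f\|_\infty\,\mathrm{TV}(h)+\|h\|_\infty\,\mathrm{TV}(f)$. Your bookkeeping is slightly tighter and delivers the constant $4/\pi$, comfortably below the paper's $7/\pi$.
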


\begin{proof}
By the integration by parts formula, we obtain
\begin{align*}
\int_0^{\sqrt{M}} dy \, \frac{e^{2 \pi i g(y)}}{\rho(y)} &=
\int_0^{\sqrt{M}} dy \, \frac{2 \pi i g'(y)}{2 \pi i  g'(y)\rho(y)} \, e^{ 2 \pi i  g(y)}\\
&=  \frac{e^{ 2 \pi i g(y))}}{2 \pi i g'(y)\rho(y)} \bigg|_0^{\sqrt{M}} - \int_0^{\sqrt{M}} dy \, \frac{g''(y)\rho(y)+g'(y)\rho'(y)}{2 \pi i (g'(y)\rho(y))^2} \, e^{ 2 \pi i g(y)}
\end{align*}
which in turn yields the upper bound
\begin{align*}
\left|\int_0^{\sqrt{M}} dy \, \frac{e^{2 \pi i g(y)}}{\rho(y)} \right|&\le \frac{1}{2 \pi} \, \Bigg( \frac{1}{|g'({\sqrt{M}})||\rho({\sqrt{M}})|}+ \frac{1}{|g'(0)||\rho(0)|}\\&+ \int_0^{\sqrt{M}} dy \, \frac{|g''(y)|}{(g'(y))^2|\rho(y)|} + \int_0^{\sqrt{M}} dy \, \frac{|\rho'(y)|}{g'(y)(\rho(y))^2} \Bigg)
\end{align*}
By the assumptions made in the lemma, we have
\begin{align*}
\int_0^{\sqrt{M}} dy \, \frac{|g''(y)|}{(g'(y))^2|\rho(z)|} &\le \frac{1}{c_2} \int_0^{\sqrt{M}} dy \, \frac{|g''(y)|}{(g'(y))^2}\\
&= \frac{1}{c_2} \Bigg( - \int_0^{y_-} dy \, \frac{g''(y)}{(g'(y))^2} + \int_{y_-}^{y_+} dy \, \frac{g''(y)}{(g'(y))^2}
 - \int_{y_+}^{\sqrt{M}} dy \, \frac{g''(y)}{(g'(y))^2} \Bigg)\\
&= \frac{1}{c_2} \, \Bigg(
\frac{1}{g'({\sqrt{M}})} - \frac{1}{g'(0)} + \frac{2}{g'(y_-)} - \frac{2}{g'(y_+)} \Bigg)
\end{align*}
So
$$
\int_0^{\sqrt{M}} dy \, \frac{|g''(y)|}{(g'(y))^2|\rho(y)|} \le \frac{7}{c_1 \, c_2}.
$$
We obtain in a similar manner that
$$
\int_0^{\sqrt{M}} dy \, \frac{|\rho'(y)|}{g'(y)(\rho(y))^2} \le \frac{7}{c_1 \, c_2}
$$
Combining all the bounds, we finally get
$$
\left| \int_0^{\sqrt{M}} dy \, \frac{e^{2 \pi i g(y)}}{\rho(y)} \right| \le \frac{7}{\pi \, c_1 \, c_2}
$$
\end{proof}
For any $\epsilon>0$, we can upper bound $A_{1,2}$ in equation \eqref{eq:S2_UB} as follows
\begin{align}\nonumber
A_{1,2}&= \int dj_2\bigg|\int d{k_1}\, \frac{e^{2\pi i g_{j_1j_2}(k_1)}}{\rho_{j_1j_2}(k_1)}\bigg|\\\nonumber
&= \int_{|y_{j_2}-y_{j_1}|<\epsilon\sqrt{M}} d{j_2} \bigg| \int d{k_1}\, \frac{e^{2\pi i g_{j_1j_2}(k_1)}}{\rho_{j_1j_2}(k_1)}\bigg| + \int_{|y_{j_2}-y_{j_1}|\geq\epsilon\sqrt{M}} d{j_2} \bigg| \int  d{k_1}\, \frac{e^{2\pi i g_{j_1j_2}(k_1)}}{\rho_{j_1j_2}(k_1)}\bigg|\\\nonumber
&\le \int_{|y_{j_2}-y_{j_1}|<\epsilon\sqrt{M}} d{j_2}  \int d{k_1}\, \frac{1}{\rho_{j_1j_2}(k_1)} + \int_{|y_{j_2}-y_{j_1}|\geq\epsilon\sqrt{M}} d{j_2} \bigg| \int  d{k_1}\, \frac{e^{2\pi i g_{j_1j_2}(k_1)}}{\rho_{j_1j_2}(k_1)}\bigg|\\
&\le \frac{\epsilon\,M}{d_{jk}^2} + \int_{|y_{j_2}-y_{j_1}|\geq\epsilon\sqrt{M}} d{j_2} \bigg| \int  d{k_1}\, \frac{e^{2\pi i g_{j_1j_2}(k_1)}}{\rho_{j_1j_2}(k_1)}\bigg|\label{eq:A12}
\end{align}
Furthermore, note that 
\begin{align*}
g_{j_1j_2}(k_1)&=r_{j_1k_1}-r_{j_2k_1}\\
&=-\int_{x_{j_1}}^{x_{j_2}}\frac{d_{jk}-1+x+x_{k_1}}{\sqrt{(d_{jk}-1+x+x_{k_1})^2+(y_{j_1}-y_{k_1})^2}}dx+
\int_{y_{j_1}}^{y_{j_2}}\frac{y_{k_1}-y}{\sqrt{(d_{jk}-1+x_{j_2}+x_{k_1})^2+(y-y_{k_1})^2}}dy
\end{align*}
Therefore, the first order partial derivative of $g_{j_1j_2}(k_1)$ with respect to $y_{k_1}$ is given by
\begin{align*}
\frac{\partial g_{j_1j_2}(k_1)}{\partial y_{k_1}} =\int_{x_{j_1}}^{x_{j_2}}\frac{(y_{k_1}-y_{j_1})(d_{jk}-1+x+x_{k_1})}{\left((d_{jk}-1+x+x_{k_1})^2+(y_{j_1}-y_{k_1})^2\right)^{3/2}}dx+
\int_{y_{j_1}}^{y_{j_2}}\frac{(d_{jk}-1+x_{j_2}+x_{k_1})^2}{\left((d_{jk}-1+x_{j_2}+x_{k_1})^2+(y-y_{k_1})^2\right)^{3/2}}dy
\end{align*}
From this expression, we deduce that for a constant $c_3>0$
\begin{align}\nonumber\label{eq:dg_LB}
\bigg|\frac{\partial g_{j_1j_2}(k_1)}{\partial y_{k_1}}\bigg| &\geq c_3\,\frac{|y_{j_2}-y_{j_1}|}{d_{jk}}-\frac{|y_{k_1}-y_{j_1}|.|x_{j_2}-x_{j_1}|}{d_{jk}^2} \\\nonumber
&\ge c_3\,\frac{|y_{j_2}-y_{j_1}|}{d_{jk}}-\frac{\sqrt{M}}{d_{jk}^2}\\
&\overset{(a)}{\ge} \frac{c_3\,|y_{j_2}-y_{j_1}|-1}{d_{jk}},
\end{align}
where $(a)$ follows from the fact that $d_{jk}\geq \sqrt{M}$. For $c_3\,|y_{j_2}-y_{j_1}|-1>0$ (we will tune $\epsilon$ accordingly, as we will see), using \eqref{eq:S2_rho} and \eqref{eq:dg_LB}, we can apply lemma \ref{lem:mu} and upper bound the second term in \eqref{eq:A12} as follows
\begin{align}\nonumber
\int_{|y_{j_2}-y_{j_1}|\geq\epsilon\sqrt{M}} d{j_2} \bigg| \int d{k_1}\, \frac{e^{2\pi i g_{j_1j_2}(k_1)}}{\rho_{j_1j_2}(k_1)}\bigg|&\le \int_{|y_{j_2}-y_{j_1}|\geq\epsilon\sqrt{M}} d{j_2}\int_0^1 dx_{k_1} \bigg| \int_0^{\sqrt{M}} dy_{k_1}\, \frac{e^{2\pi i g_{j_1j_2}(k_1)}}{\rho_{j_1j_2}(k_1)}\bigg|\\\nonumber
&\le  \int_{|y_{j_2}-y_{j_1}|\geq\epsilon\sqrt{M}} dy_{j_2} \frac{7}{\pi \,\frac{c_3|y_{j_2}-y_{j_1}|-1}{d_{jk}} d_{jk}^2}\\\nonumber
&\le \frac{7}{\pi c_3 d_{jk}} \int_{|y_{j_2}-y_{j_1}|\geq\epsilon\sqrt{M}} \frac{1}{|y_{j_2}-y_{j_1}|-1/c_3}dy_{j_2}\\\label{eq:UB_to_use}
&\le \frac{7}{\pi c_3d_{jk}}\log\left(\frac{1}{\epsilon}\right)
\end{align}
which gives the following upper bound on \eqref{eq:A12}
\begin{equation}\label{eq:A12_UB}
A_{1,2}\le \frac{\epsilon\,M}{d_{jk}^2} + \frac{7}{\pi c_3d_{jk}}\log\left(\frac{1}{\epsilon}\right)\overset{(a)}{=} O\left(\frac{\log M}{d_{jk}}\right),
\end{equation}
where $(a)$ results from choosing $\epsilon=\frac{c_4}{\sqrt{M}}$ with sufficiently large $c_4>0$, which also ensures that $c_3\,|y_{j_2}-y_{j_1}|-1>0$. For the chosen value of $\epsilon$, we get $S_2= O\left(\frac{1}{\sqrt{M} d_{jk}^4}\right)+ O\left(\frac{1}{M d_{jk}^3}\log M\right)=O\left(\frac{1}{M d_{jk}^3}\log M\right)$. As a result, we get 
\begin{equation}\label{eq:Tr_2}
\EE(\mathrm{Tr}((FF^{\dagger})^2))\leq 2\frac{M}{d_{jk}^3} + M^2 S_2 =O\left(M \, \frac{c\,\log M}{d_{jk}^3}\right).
\end{equation} 
Now, we generalize our result to any moment $\ell>2$. We start with the following 
proposition.
\begin{lem}\label{lem:l_th-moment}
\begin{align*}
\EE(\mathrm{Tr}((FF^{\dagger})^{\ell}))&\leq \frac{2\,\ell}{\sqrt{M}} \sum_{t=1}^{\lfloor \ell/2\rfloor}
\EE(\mathrm{Tr}((FF^{\dagger})^{t}))\,\EE(\mathrm{Tr}((FF^{\dagger})^{\ell-t})) + M^{{\ell}} S_{\ell}
\end{align*}
where
\begin{equation}\label{eq:Sl}
S_{\ell} =|\EE(f_{j_1k_1}f_{j_2k_1}^*\ldots f_{j_{\ell}k_{\ell}}f_{j_1k_{\ell}}^*)|,
\end{equation}
with $j_1\neq \ldots \neq j_{\ell}$ and $k_1\neq \ldots \neq k_{\ell}$. Note that $S_l$ does not depend on the particular choice of  $j_1\neq \ldots \neq j_{\ell}$ and $k_1\neq \ldots \neq k_{\ell}$.
\end{lem}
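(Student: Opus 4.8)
The plan is to expand $\mathrm{Tr}((FF^\dagger)^\ell)$ as a sum over closed walks and to separate the \emph{generic} walks, which produce the term $M^\ell S_\ell$, from the \emph{degenerate} ones, which produce the recursive term. Writing out the matrix product with $F=\widehat{H}_{jk}$,
\[
\EE(\mathrm{Tr}((FF^\dagger)^\ell))=\sum_{j_1,\dots,j_\ell=1}^{\sqrt M}\ \sum_{k_1,\dots,k_\ell=1}^{\sqrt M}\EE\!\left(f_{j_1k_1}f_{j_2k_1}^*\,f_{j_2k_2}f_{j_3k_2}^*\cdots f_{j_\ell k_\ell}f_{j_1k_\ell}^*\right),
\]
which we read as a sum over closed walks on the $2\ell$-cycle whose vertices alternate between the row indices $j_1,\dots,j_\ell$ and the column indices $k_1,\dots,k_\ell$. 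Since the node positions are i.i.d., the expectation of a monomial depends only on the coincidence pattern of its indices; in particular every monomial whose row indices are pairwise distinct \emph{and} whose column indices are pairwise distinct has modulus exactly $S_\ell$, and there are at most $(\sqrt M)^\ell(\sqrt M)^\ell=M^\ell$ such tuples, contributing at most $M^\ell S_\ell$ in total.

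It then remains to bound the degenerate tuples, those for which two of the $j$'s, or two of the $k$'s, coincide. The identity $\mathrm{Tr}((FF^\dagger)^\ell)=\mathrm{Tr}((F^\dagger F)^\ell)$ exchanges the roles of the row and column indices, so the tuples with a repeated column index are bounded exactly as those with a repeated row index, which is where the overall factor $2$ comes from; it thus suffices to treat a repeated row index. For such a tuple, a repeated row vertex $v$ pinches the closed $2\ell$-cycle into two closed alternating sub-walks, of lengths $2t$ and $2(\ell-t)$ with $1\le t\le\ell-1$, both passing through $v$; choosing $v$ to be, say, the first row vertex revisited by the walk makes this canonical. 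Enumerating the at most $\ell$ positions of the pinch point on the cycle, summing over the sub-walk length $t$, and using that $t$ and $\ell-t$ play symmetric roles accounts for the prefactor $\ell\sum_{t=1}^{\lfloor\ell/2\rfloor}$.

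For a fixed pinch, write the monomial as a product $P_1P_2$ of the two sub-walk monomials. These share only the node $v$ (plus, if the tuple has further coincidences, some additional nodes, which only helps); conditioning on the position of $v$ makes $P_1$ and $P_2$ independent, whence
\[
\left|\EE(P_1P_2)\right|\ \le\ \EE\!\left(\big|\EE(P_1\mid v)\big|\,\big|\EE(P_2\mid v)\big|\right).
\]
Summing the monomial over the free row and column indices of the first sub-walk reconstructs the $(v,v)$-entry of $(FF^\dagger)^t$, and similarly for the second; summing finally over $v$ turns these into $\mathrm{Tr}((FF^\dagger)^t)$ and $\mathrm{Tr}((FF^\dagger)^{\ell-t})$. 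The one bookkeeping point is that the pinch vertex is summed once rather than twice, so one summation of range $\sqrt M$ is lost relative to the product $\mathrm{Tr}((FF^\dagger)^t)\,\mathrm{Tr}((FF^\dagger)^{\ell-t})$; this produces the factor $1/\sqrt M$, and tuples with several coincidences lose even more summations and are absorbed into the same bound. Combining the generic and degenerate parts then gives the stated inequality.

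The step I expect to be the main obstacle is the pinching in the middle paragraph: one must associate to every degenerate tuple a pinch, with bounded multiplicity, so that the resulting two sub-walks genuinely decouple after conditioning — the delicate case being a tuple with several interleaved coincidences for which no single pinch separates the walk into node-disjoint halves. This is handled by bounding such monomials in modulus, so the extra shared nodes only tighten the estimate, and by checking that the first-revisit pinch always yields two well-defined closed alternating sub-walks whose free-index sums are exactly the relevant diagonal entries of powers of $FF^\dagger$.
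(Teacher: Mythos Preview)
Your proposal is correct and follows essentially the same approach as the paper: expand the trace as a sum over index tuples, separate the all-distinct case (contributing $M^\ell S_\ell$) from the degenerate case, and for the latter observe that a coincidence $j_a=j_b$ (or, by the $j\leftrightarrow k$ symmetry, $k_a=k_b$) at cyclic distance $t$ splits the monomial into two factors that rebuild the diagonal entries of $(FF^\dagger)^t$ and $(FF^\dagger)^{\ell-t}$, with the shared vertex accounting for the missing factor $\sqrt{M}$. Your explicit discussion of the decoupling at the pinch (conditioning on $v$, and absorbing tuples with further coincidences) is in fact more detailed than the paper's own proof, which simply flags step~(b) as ``an order equality'' and omits the technicalities.
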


\begin{proof}
We know that
$$\EE(\mathrm{Tr}((FF^{\dagger})^\ell))=\sum_{\substack{j_1, \ldots, j_{\ell}=1\\k_1, \ldots, k_{\ell}=1}}^{\sqrt{M}} \EE(f_{j_1k_1}f_{j_2k_1}^*\ldots f_{j_{\ell}k_{\ell}}f_{j_1k_{\ell}}^*).$$
We split the summation such that at least two $j_t$ indices are equal or two $k_t$ indices are equal, where $t\in\{1,\ldots,\ell\}$. As such, we can have
\begin{align*}
\EE(\mathrm{Tr}((FF^{\dagger})^\ell))&\le \sum_{\substack{j_1=j_2=1,\\j_3, \ldots, j_{\ell}=1,\\k_1, \ldots, k_{\ell}=1}}^{\sqrt{M}} \EE(f_{j_1k_1}f_{j_2k_1}^*\ldots f_{j_{\ell}k_{\ell}}f_{j_1k_{\ell}}^*)+\sum_{\substack{j_{1}=j_{3}=1,\\j_2,j_4, \ldots, j_{\ell}=1,\\k_1, \ldots,k_{\ell}=1}}^{\sqrt{M}} \EE(f_{j_1k_1}f_{j_2k_1}^*\ldots f_{j_{\ell}k_{\ell}}f_{j_1k_{\ell}}^*)+\ldots\\
&+\sum_{\substack{j_1\neq \ldots\neq j_{\ell}\\k_1\neq \ldots\neq k_{\ell}}} \EE(f_{j_1k_1}f_{j_2k_1}^*\ldots f_{j_{\ell}k_{\ell}}f_{j_1k_{\ell}}^*)\\
&\overset{(a)}\leq 2\,\ell \sum_{t=1}^{\lfloor \ell/2 \rfloor} \sum_{\substack{j_1,\ldots,j_t,\\j_{t+1}=j_1,\\j_{t+2}\ldots, j_{\ell}=1,\\k_1, \ldots, k_{\ell}=1}}^{\sqrt{M}} \EE(f_{j_1k_1}\ldots f_{j_1k_t}^*f_{j_1k_{t+1}}\ldots f_{j_1k_{\ell}}^*) + \sum_{\substack{j_1\neq \ldots\neq j_{\ell}\\k_1\neq \ldots\neq k_{\ell}}} \EE(f_{j_1k_1}f_{j_2k_1}^*\ldots f_{j_{\ell}k_{\ell}}f_{j_1k_{\ell}}^*)\\
&\overset{(b)}= 2\,\ell\sum_{t=1}^{\lfloor \ell/2\rfloor}\sum_{j_1=1}^{\sqrt{M}}
\frac{\EE(\mathrm{Tr}((FF^{\dagger})^{t}))}{\sqrt{M}}\, \frac{\EE(\mathrm{Tr}((FF^{\dagger})^{\ell-t}))}{\sqrt{M}} + \sum_{\substack{j_1\neq \ldots\neq j_{\ell}\\k_1\neq \ldots\neq k_{\ell}}} \EE(f_{j_1k_1}f_{j_2k_1}^*\ldots f_{j_{\ell}k_{\ell}}f_{j_1k_{\ell}}^*)\\
&=\frac{2\,\ell}{\sqrt{M}}\sum_{t=1}^{\lfloor \ell/2\rfloor}
\EE(\mathrm{Tr}((FF^{\dagger})^{t}))\, \EE(\mathrm{Tr}((FF^{\dagger})^{\ell-t})) + M^{{\ell}} S_{\ell}
\end{align*}
where 
$S_{\ell}$ is defined in \eqref{eq:Sl} and $(a)$ follows from the fact that we have at most $\ell$ different ways to choose two $j$ indices (equivalently two $k$ indices) apart by $t$ (for example, the indices $j_{t_1}$ and $j_{t_2}$, where $\min\{|t_2-t_1|,\ell-|t_2-t_1|\}=t$) to be equal. The particular choice of the indices is irrelevant for the computation of the expectation. Moreover, $(b)$ represents an order equality and it is not straight forward, however, we omit the technical details for the sake of the readability of the proof. 
\end{proof}

We assume now that
$$
\EE( \mathrm{Tr}((\widehat{H}_{jk}\widehat{H}_{jk}^{\dagger})^{\ell-1})\le \frac{M(c\,\log M)^{\ell-2}}{d_{jk}^{\ell}},
$$
which holds for the first and the second moments (see equations \eqref{eq:Tr_1} and \eqref{eq:Tr_2}), and prove that it also holds for the ${\ell}$-th moment.
\\

We proceed by upper bounding $S_{\ell}$ given in \eqref{eq:Sl}:
\begin{align*}
S_{\ell}&= \bigg| \frac{1}{M^{\ell}} \int d{j_1} \left(\int d{j_2} \int d{k_1}\, \frac{e^{ 2\pi i g_{j_1j_2}(k_1)}}{\rho_{j_1j_2}(k_1)}\right) 
\left(\int d{j_3} \int d{k_2}\, \frac{e^{ 2\pi i g_{j_2j_3}(k_2)}}{\rho_{j_2j_3}(k_2)}\right)\ldots\\
&\hspace{1cm} \left(\int d{j_{\ell}} \int d{k_{\ell-1}}\, \frac{e^{ 2\pi i g_{j_{\ell-1},j_{\ell}}(k_{\ell-1})}}{\rho_{j_{\ell-1},j_{\ell}}(k_{\ell-1})}\right)
\left(\int d{k_{\ell}}\, \frac{e^{ 2\pi i g_{j_{\ell},j_1}(k_{\ell})}}{\rho_{j_{\ell},j_{1}}(k_{\ell})}\right)
\bigg|\\
&\leq  \frac{1}{M^{\ell}} \int d{j_1} \left(\int d{j_2} \bigg|\int d{k_1}\, \frac{e^{ 2\pi i g_{j_1j_2}(k_1)}}{\rho_{j_1j_2}(k_1)}\bigg|\right) 
\left(\int d{j_3}\bigg| \int d{k_2}\, \frac{e^{ 2\pi i g_{j_2j_3}(k_2)}}{\rho_{j_2j_3}(k_2)}\bigg|\right)\ldots\\
&\hspace{1cm} \left(\int d{j_{\ell}} \bigg|\int d{k_{\ell-1}}\, \frac{e^{ 2\pi i g_{j_{\ell-1},j_{\ell}}(k_{\ell-1})}}{\rho_{j_{\ell-1},j_{\ell}}(k_{\ell-1})}\bigg|\right)
\bigg|\int d{k_{\ell}}\, \frac{e^{ 2\pi i g_{j_{\ell},j_1}(k_{\ell})}}{\rho_{j_{\ell},j_{1}}(k_{\ell})}
\bigg|\\
&=\frac{1}{M^{\ell}}\int d{j_1}\,A_{1,2} \cdot A_{2,3} \cdots  A_{\ell-1,\ell} \cdot B_{\ell,1} 
\end{align*}
where (just as we defined $A_{1,2}$ and $B_{2,1}$) 
$$
A_{t,t+1}=\int d{j_{t}} \bigg|\int d{k_{t-1}}\, \frac{e^{ 2\pi i g_{j_{t-1},j_{t}}(k_{t-1})}}{\rho_{j_{t-1},j_{t}}(k_{t-1})}\bigg|\,\,\,\,\,\, \text{for $1\le t\le \ell-1$}
$$
and
$$
B_{\ell,1}=\bigg|\int d{k_{\ell}}\, \frac{e^{ 2\pi i g_{j_{\ell},j_1}(k_{\ell})}}{\rho_{j_{\ell},j_{1}}(k_{\ell})}
\bigg|.
$$
Similarly  to how we proceeded with $A_{1,2}$ and $B_{2,1}$ in \eqref{eq:A12_UB} and \eqref{eq:B21}, respectively, we now upper bound $A_{t,t+1}$ (for $1\leq t \leq \ell-1$) and $B_{\ell,1}$. Therefore, we get
\begin{align*}
S_{\ell}\leq \frac{1}{M^{\ell-1/2}d_{jk}^2}\int d{j_1}\,A_{1,2} \cdot A_{2,3} \cdots A_{\ell-1,\ell} \cdot B_{\ell,1} \le \frac{1}{M^{\ell-1} d_{jk}^2}\left(c\,\frac{\log{M}}{d_{jk}}\right)^{\ell-1} 
\end{align*}
Finally, we obtain
\begin{align*}
\EE(\mathrm{Tr}((FF^{\dagger})^\ell))&\leq \frac{2\,\ell}{\sqrt{M}}\sum_{t=1}^{\lfloor \ell/2\rfloor}
\EE(\mathrm{Tr}((FF^{\dagger})^{t}))\, \EE(\mathrm{Tr}((FF^{\dagger})^{\ell-t})) + M^{{\ell}} S_{\ell}\\
&\le \frac{2\,\ell}{\sqrt{M}}\sum_{t=1}^{\lfloor \ell/2\rfloor} \frac{M(c\,\log M)^{t-1}}{d_{jk}^{t+1}}\frac{M(c\,\log M)^{\ell-t-1}}{d_{jk}^{\ell-t+1}}+ M^{{\ell}} S_{\ell}\\
&\le O\left( M\frac{(c\,\log M)^{{\ell}-1}}{d_{jk}^{{\ell}+1}}\right) + M^{{\ell}} S_{\ell} = O\left( M\frac{(c\,\log M)^{{\ell}-1}}{d_{jk}^{{\ell}+1}}\right),
\end{align*}
which concludes the induction. The last step includes applying Markov's inequality to get
\begin{align*}
\mathbb{P}\left(\lambda_{\max}(\widehat{H}_{jk}\widehat{H}_{jk}^{\dagger})\geq 
c' \frac{M^{\epsilon}}{d_{jk}}\right)&\leq \frac{\EE((\lambda_{\max}(\widehat{H}_{jk}\widehat{H}_{jk}^{\dagger}))^\ell)}
{(c' M^{\epsilon}/d_{jk})^{\ell}}\\
&\le \frac{\EE(\mathrm{Tr}((FF^{\dagger})^\ell))}
{(c' M^{\epsilon}/d_{jk})^{\ell}}\\
&\le \frac{{M\,(c\log M)^{{\ell}-1}}/{d_{jk}^{{\ell}+1}}}{(c' M^{\epsilon}/d_{jk})^{\ell}}\\
&\le \frac{M\,(\log M)^{{\ell}-1}}{d_{jk}\, M^{\epsilon\ell}}
\end{align*}
which, for any fixed $\epsilon > 0$, can be made arbitrarily small by taking $\ell$ sufficiently large.\\
\begin{figure}
\centering
\includegraphics[scale=0.55]{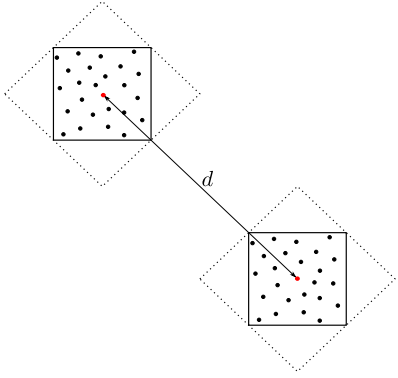}
\caption{Two tilted square clusters that have a center-to-center distance $d$. We can draw larger squares (drawn in dotted line) containing the original clusters with the same centers that are aligned.}
\label{fig:5}
\end{figure}

A last remark is that we proved lemma \ref{lem:blockNorm} for aligned clusters. However, the proof can be easily generalized to tilted clusters, as shown in Fig. \ref{fig:5}. We can always draw a larger cluster containing the original cluster and having the same center. The larger cluster can at most contain twice as many nodes as the original cluster. The large clusters are now aligned. Moreover, the distance $d$ from the centers of the two newly created large clusters still satisfies the required condition ($2\sqrt{M}\le d \le M$).
\end{proof} 


\end{document}